
\documentclass[reqno]{amsart}%
\usepackage{amsfonts}
\usepackage{verbatim}
\usepackage{xcolor}
\usepackage{amsmath}
\usepackage{amssymb}
\usepackage{graphicx}
\usepackage{accents}%
\setcounter{MaxMatrixCols}{30}
\providecommand{\U}[1]{\protect\rule{.1in}{.1in}}
\newtheorem{theorem}{Theorem}[section]
\newtheorem{condition}[theorem]{Hypothesis}

\newtheorem{proposition}[theorem]{Proposition}
\newtheorem{remark}[theorem]{Remark}

\numberwithin{equation}{section}
\begin{document}
\title[KdV equation]{KdV equation beyond standard assumptions on initial data\thanks{To appear in
Physica D: Nonlinear Phenomena}}
\author{Alexei Rybkin}
\address{Department of Mathematics and Statistics, University of Alaska Fairbanks, PO
Box 756660, Fairbanks, AK 99775}
\email{arybkin@alaska.edu}
\thanks{AR is supported in part by the NSF grant DMS-1411560}
\date{October, 2017}
\subjclass{34B20, 37K15, 47B35}
\keywords{KdV equation, Hankel operators.}

\begin{abstract}
We show that the Cauchy problem for the KdV equation can be solved by the
inverse scattering transform (IST)\ for any initial data bounded from below,
decaying sufficiently rapidly at $+\infty$, but unrestricted otherwise. Thus
our approach doesn't require any boundary condition at $-\infty$.

\end{abstract}
\dedicatory{We dedicate this paper to the memory of Ludwig Faddeev, one of the founders of
soliton theory.}\maketitle
\tableofcontents

\section{Introduction}

This note is motivated in part by the open question posed by Vladimir Zakharov
in his plenary talk in July of 2016 at the XXXV Workshop on Geometric Methods
in Physics in Bialowieza, Poland. Zakharov stated the problem of understanding
the \emph{KdV equation}%

\begin{equation}%
\begin{cases}
\partial_{t}u-6u\partial_{x}u+\partial_{x}^{3}u=0\\
u(x,0)=q(x)
\end{cases}
\label{KdV}%
\end{equation}
with generic bounded but not decaying initial data $q$. He specifically
pointed out that (\ref{KdV}) no longer has (finite) conservation laws while
existence of infinitely many such laws is one of the main features of any
completely integrable system. A similar question was stated in his 2016 paper
\cite{Zakharovetal16}:\ 'Suppose that the initial condition of (\ref{KdV}) is
a bounded function, which is neither rapidly vanishing nor periodic. What is
its behavior under time evolution?'

To show how nontrivial this problem is let us put it in the historic context.
For smooth rapidly decaying $q$'s (\ref{KdV}) was solved in closed form in the
short 1967 paper \cite{GGKM67} by Gardner-Greene-Kruskal-Miura (GGKM) . As is
well-known, the paper \cite{GGKM67} introduces what we now call the\emph{
inverse scattering transform }(IST), one of the major discoveries in the
twentieth century mathematics. This paper was immediately followed by
\cite{Lax68} where the famous\emph{ Lax pair} first appeared. More
specifically, associate with $u\left(  x,t\right)  $ in (\ref{KdV}) two linear
operators, called the Lax pair,%
\begin{equation}
L\left(  t\right)  =-\partial_{x}^{2}+u\left(  x,t\right)  \text{ (the
\emph{Schr\"{o}dinger operator}),} \label{schrodinger}%
\end{equation}%
\[
P\left(  t\right)  =-4\partial_{x}^{3}+6u\left(  x,t\right)  \partial
_{x}+3\left(  \partial_{x}u\left(  x,t\right)  \right)  .
\]
The main observation made in \cite{Lax68} is that the KdV equation can be
represented as%
\begin{equation}
\partial_{t}L\left(  t\right)  =\left[  P\left(  t\right)  ,L\left(  t\right)
\right]  ,\text{ (the \emph{Lax representation})} \label{lax system}%
\end{equation}
which immediately implies that if $u$ solves (\ref{KdV}) then $L\left(
t\right)  $ is unitary equivalent to $L\left(  0\right)  =:\mathbb{L}%
_{q}=-\partial_{x}^{2}+q\left(  x\right)  .$ The latter means that the
spectrum of $L\left(  t\right)  $ is preserved under the KdV flow. This in
turn implies that $\psi$ and $\partial_{t}\psi-P\left(  t\right)  \psi$ are
both eigenfunctions (of discrete or continuos spectrum) associate with the
same point of spectrum. While the Lax representation (\ref{lax system}) is a
manifestation of a very specific structure of the KdV equation,
(\ref{lax system}) alone is not of much help to solving the Cauchy problem
(\ref{KdV}) as finding $\psi$ is not any easier than solving (\ref{KdV}). The
main reason why the Lax representation works is that in some important cases
direct computation of $\psi$ can be effectively circumvented. It is the case
in the original GGKM setting. More specifically, if we assume that (\ref{KdV})
has a smooth rapidly decaying solution $u\left(  x,t\right)  $ for all
$t\geq0$ then the Jost solutions $\psi_{\pm}$ remain Jost under the KdV flow.
This readily implies that the scattering data for the pair $\left(  L\left(
t\right)  ,\mathbb{L}_{0}\right)  $ evolves in time by a very simple law. The
solution $u\left(  x,t\right)  $ to (\ref{KdV}) for each $t>0$ is now obtained
by the formula%
\begin{equation}
u\left(  x,t\right)  =-2\partial_{x}^{2}\log\tau\left(  x,t\right)  ,
\label{tau}%
\end{equation}
where $\tau$ is the so-called \emph{Hirota tau-function} introduced in
\cite{Hirota71} and it admits an explicit representation in terms of the
scattering data\footnote{Will be given later.}. The solution has a relatively
simple and by now well understood wave structure of running (finitely many)
solitons accompanied by radiation of decaying waves (see e.g.
\cite{EckhuasSchuur83}, \cite{Schuur86}).

Another equally important case is when $q$ is periodic. Like in the previous
case the problem (\ref{KdV}) is a priori well-posed and a unique periodic
solution to (\ref{KdV}) exists. While it was quite clear from the beginning
that the GGKM approach should work but it was not until 1974 when the actual
IST was found in the periodic context by a considerable effort of such top
experts as Dubrovin, Flascka, Its, Marchenko, Matveev, McKean, Novikov,
Trubowitz, to name just few. We refer to the historically very first survey
\cite{DubMatNov76} by Dubrovin-Matveev-Novikov and the 2003 Gesztesy-Holden
book \cite{GesHold03} where a complete history is given. The periodic IST\ is
quite different from the GGKM\ one and is actually the \emph{inverse spectral
transform} (also abbreviated as IST) since it relies on the Floquet theory for
$\mathbb{L}_{q}$ and analysis of Riemann surfaces and hence is much more
complex than the rapidly decaying case. The time evolution of the spectral
data is nevertheless simple (but not simple to derive) and the solution
$u\left(  x,t\right)  $ is given essentially by the same formula (\ref{tau}),
frequently referred to as the \emph{Its-Matveev formula} \cite{ItsMat75}, but
$\tau$ is a multidimensional\footnote{Infinite dimensional in general.}
theta-function of real hyperelliptic algebraic curves explicitly computed in
terms of spectral data of the associated Dirichlet problem for $\mathbb{L}%
_{q}$. It is therefore very different from the rapidly decaying case. The main
feature of a periodic solution is its quasi-periodicity in time $t$.

We have outlined two main classes of initial data $q$ in (\ref{KdV}) for which
a suitable form of the IST\ was found during the initial boom followed by
\cite{GGKM67}. We emphasize again that existence of the Lax pair merely means
only that the KdV flow is isospectral but it doesn't in general offer an
algorithm to find the solution. It is the simple law of time evolution of the
scattering/spectral data that makes the IST work in these two cases. That is
why Krichever and Novikov claim in \cite{KrichNovikov99} that (\ref{KdV}) is
completely integrable essentially only in these two cases. In fact, the
question if (\ref{KdV})\footnote{Or any other integrable system.} could be
solved by a suitable IST outside of these two classes, has been raised in one
form or another by (in chronological order) McLeod-Olver \cite{McLO83},
Ablowitz-Clarkson \cite{AC91}, Marchenko \cite{MarchenkoWhatIteg91},
Krichever-Novikov \cite{KrichNovikov99}, Deift \cite{DeiftOpenProb08}, Matveev
\cite{MatveevOpenProblems}, and Zakharov \cite{Zakharovetal16} to name just a
few. These authors also expand on many challenges and complications that arise
and some regard it as a major unsolved problem.

We give a complete answer to the following question: Assuming rapid decay of
$q\left(  x\right)  $ at $+\infty$, what conditions do we have to impose at
$-\infty$ for (\ref{KdV}) to be well-posed in a certain sense and solvable by
a suitable IST? We show that the only condition to be imposed is that $q$ is
bounded from below. More specifically, we assume the following condition.

\begin{condition}
\label{Cond}$q\left(  x\right)  $ is a real, locally bounded function such that

\begin{enumerate}
\item For some $0\leq h<\infty$, $q\left(  x\right)  \geq-h^{2}$ (boundedness
from below);

\item For some $\alpha>4$, $q\left(  x\right)  =O\left(  x^{-\alpha}\right)
,x\rightarrow+\infty$ (decay at $+\infty$).
\end{enumerate}
\end{condition}

We call such $q$ \emph{step-type}. Thus any $q$ subject to Hypothesis
\ref{Cond} is bounded from below, decays sufficiently fast at $+\infty$ but is
arbitrary otherwise resulting in a much more complicated spectrum. The general
spectral theory of second-order ordinary differential operators says that the
negative spectrum of $\mathbb{L}_{q}$ has multiplicity one but could be of any
type (including absolutely continuous (a.c.)) and the positive spectrum has
a.c. component filling $\left(  0,\infty\right)  $ but need not be uniform
(however no embedded bound states\footnote{This is due to fast decay at
$+\infty$ which rules out solutions square integrable at $+\infty$.}).

Note that under our conditions neither well-posedness nor IST are a priori
available and we have to deal with both. Our approach is based upon the theory
of Hankel operators (see subsection \ref{H}). Our Hankel operator is unitary
equivalent to the well-known Marchenko integral operator but particularly
convenient for limiting procedures we crucially rely on. Following our
\cite{GruRyb-prepr-13}, we first introduce the one-sided scattering theory
from the right. The scattering data can be conveniently represented in terms
of the reflection coefficient $R$ from the right and certain positive measure
$\rho$ (see subsection \ref{scattering theory}) via%
\begin{equation}
\varphi_{x,t}(k)=\xi_{x,t}(k)R(k)+\int_{0}^{h}\frac{\xi_{x,t}(is)\,d\rho
(s)}{s+ik}, \label{eq9.9}%
\end{equation}
where%
\begin{equation}
\xi_{x,t}(k):=\exp\{i(8k^{3}t+2kx)\}\ \ \ \text{(cubic exponential).}
\label{cubic exp}%
\end{equation}
This function $\varphi_{x,t}$ appears as the symbol of our Hankel operator
$\mathbb{H}(\varphi_{x,t})$ which solely carries over the scattering data and
the variables $\left(  x,t\right)  $ in the KdV equation. In particular, if
$q$ is rapidly decaying also at $-\infty$ then $\rho$ becomes discrete and the
integral in (\ref{eq9.9}) becomes a finite sum.

Our main result is the following theorem.

\begin{theorem}
[Main Theorem]\label{MainThm} Suppose that the initial data $q$ in (\ref{KdV})
satisfy Hypothesis \ref{Cond}. Let%
\[
q_{b}\left(  x\right)  =\left\{
\begin{array}
[c]{cc}%
0, & x<b\\
q\left(  x\right)  , & x\geq b
\end{array}
\right.
\]
and denote by $u_{b}(x,t)$ the (necessarily unique) classical\footnote{I.e.,
at least three times continuously differentiable in $x$ and once in $t$.}
solution of (\ref{KdV}) with data $q_{b}$. Then for every $x$ and $t>0$ the
solutions $u_{b}(x,t)$ converge\footnote{In fact, uniform convergence on
compact sets of $\left(  x,t\right)  $ takes place but do not need it here. We
hope to present much more specific statements about the convergence
elsewhere.} to some $u(x,t)$ as $b\rightarrow-\infty$ which is also a
classical solution to the KdV equation. Moreover,%
\begin{equation}
u(x,t)=-2\partial_{x}^{2}\log\det\left(  1+\mathbb{H}(\varphi_{x,t})\right)  ,
\label{det_form}%
\end{equation}
with $\varphi_{x,t}$ defined by (\ref{eq9.9}), where the infinite determinant
is understood in the classical Fredholm sense.
\end{theorem}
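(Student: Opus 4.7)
My plan is to reduce the theorem to a limiting argument at the level of Fredholm determinants of Hankel operators. For each finite $b$ the truncated potential $q_b$ vanishes to the left of $b$ and decays like $x^{-\alpha}$ at $+\infty$, so it sits squarely within the classical GGKM class of rapidly decaying data. Hence existence, uniqueness, and classical regularity of $u_b$ come for free, and the Hirota tau-function representation applies:
\[
u_b(x,t) = -2\partial_x^2 \log \det\bigl(1+\mathbb{H}(\varphi^{(b)}_{x,t})\bigr),
\]
where the symbol $\varphi^{(b)}_{x,t}$ is assembled from the right scattering data $(R^{(b)}, \rho^{(b)})$ of $q_b$ exactly as in (\ref{eq9.9}). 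The bound $q_b\geq -h^2$ confines the finitely many negative eigenvalues of $\mathbb{L}_{q_b}$ to $[-h^2,0)$, so each $\rho^{(b)}$ is a finite atomic measure supported in $[0,h]$.

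The principal analytic step is to prove that $\mathbb{H}(\varphi^{(b)}_{x,t})\to \mathbb{H}(\varphi_{x,t})$ in the trace class as $b\to-\infty$. One-sided scattering theory together with the uniform lower bound $q\geq -h^2$ should supply pointwise convergence $R^{(b)}(k)\to R(k)$ of right reflection coefficients and weak-$*$ convergence $\rho^{(b)}\to\rho$ on the common compact set $[0,h]$. The symbol $\varphi_{x,t}$ depends linearly on $(R,\rho)$, and it is here that this decomposition pays off: the $R$-piece inherits sufficient smoothness from $\alpha>4$ to yield standard trace-class Hankel estimates, while the $\rho$-piece is a weak integral of positive rank-one Hankel symbols $s\mapsto \xi_{x,t}(is)/(s+ik)$ with $s$ in the compact interval $[0,h]$; this structure is precisely what should promote weak-$*$ convergence of $\rho^{(b)}$ to trace-norm convergence of the associated Hankel piece.

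With trace-class convergence of the Hankel operators in hand, the Fredholm determinant passes to the limit by its standard continuity in the trace norm, and stays bounded away from zero thanks to the positivity of $\rho$ and the positive Hermitian structure of $1+\mathbb{H}(\varphi_{x,t})$. The dependence of $\varphi_{x,t}$ on $(x,t)$ through the entire cubic exponential $\xi_{x,t}$ produces analyticity of $\det(1+\mathbb{H}(\varphi_{x,t}))$ in a complex neighborhood of each $(x,t)$ with $t>0$; consequently $\log$ and $\partial_x^2$ commute with the limit, yielding both $u_b(x,t)\to u(x,t)$ and the explicit representation (\ref{det_form}). That $u$ solves KdV classically then follows either by passing to the limit in the KdV equations for $u_b$, using locally uniform convergence of all needed derivatives (guaranteed by the analyticity of the determinant via Cauchy's integral formula), or by directly verifying the equation from the prescribed time evolution of $\varphi_{x,t}$.

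The chief obstacle will be the trace-class step: when $q$ has nontrivial absolutely continuous negative spectrum the limit measure $\rho$ can be purely continuous while every $\rho^{(b)}$ is a finite sum of point masses, so the argument must bridge a qualitative change in spectral type. This is precisely the payoff of the Hankel (rather than Marchenko) realization emphasized in the introduction: the linearity of the symbol in $(R,\rho)$ together with the simple-pole kernel $(s+ik)^{-1}$ should render the required trace-norm estimates insensitive to the spectral type of the limiting measure.
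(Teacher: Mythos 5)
Your overall strategy --- represent $u_b$ by a Fredholm determinant, prove trace-norm convergence of the Hankel operators as $b\to-\infty$, and pass to the limit --- is the same skeleton as the paper's proof, but the way you propose to carry out the trace-class step has a genuine gap. You split the symbol (\ref{eq9.9}) into its $R$-piece and its $\rho$-piece and estimate each separately: the $R$-piece is supposed to ``inherit sufficient smoothness from $\alpha>4$'' and the $\rho$-piece is handled as a superposition of simple-pole symbols over the compact set $[0,h]$. Neither half survives scrutiny. For step-type data the limiting reflection coefficient $R$ is \emph{not} smooth and need not even decay or be continuous --- the paper stresses that (\ref{props of R3})--(\ref{props of R4}) fail and that $\left\vert R(k)\right\vert =1$ a.e.\ is possible --- because $R$ contains, besides a part $R_0$ controlled by the decay at $+\infty$, an analytic part $A$ carrying all the information about the negative spectrum. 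And the $\rho$-piece alone is in general \emph{not} trace class: the remark after Theorem \ref{th9.5} records that the Hankel operator with symbol $\int_{0}^{h}\xi_{x,t}(is)(s+ik)^{-1}\,d\rho(s)$ lies in $\mathfrak{S}_{1}$ iff $\int_{0}^{h}d\rho(s)/s<\infty$, which can fail when $\rho$ has mass near $s=0$. So the decomposition you chose cannot yield the needed estimates piece by piece, no matter how the weak-$*$ convergence of $\rho^{(b)}$ is exploited.

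The missing idea is the analytic split formula (Theorem \ref{step like refl}), $R=R_{0}+A$ with $A$ expressed through the Titchmarsh--Weyl $m$-functions, combined with a deformation of the contour in $\mathbb{P}_{-}(\xi_{x,t}A)$ to the line $\mathbb{R}+ih_{0}$, $h_{0}>h$. Under this deformation the jump of $A$ across $i\Delta$ produces a term that exactly \emph{cancels} the $\rho$-integral in (\ref{eq9.9}) (the paper calls this cancellation ``the main reason why our approach works''), leaving an entire function $\Phi_{x,t}$ whose Hankel operator, together with all its $(x,t)$-derivatives, is trace class by Proposition \ref{trace AAK}. The difference $\mathbb{H}(\varphi_{x,t}^{(b)})-\mathbb{H}(\varphi_{x,t})$ then reduces to $\mathbb{H}(\Delta\Phi_{x,t})$, governed solely by $\Delta f=(m_{-}^{(b)}+m_{+})^{-1}-(m_{-}+m_{+})^{-1}$, and trace-norm convergence follows from locally uniform convergence of $m$-functions (Theorem \ref{conver of m-functions}) plus a Herglotz bound on the tails of the contour --- precisely the bridge over the ``qualitative change in spectral type'' that you correctly identified as the chief obstacle but did not resolve. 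Two smaller points: the absence of the eigenvalue $-1$ for $\mathbb{H}(\varphi_{x,t})$ is the hardest part of Theorem \ref{th9.5} and does not follow from ``positivity of $\rho$'' or any positive Hermitian structure of $1+\mathbb{H}$; and analyticity of the determinant in $(x,t)$ is not available under Hypothesis \ref{Cond} alone (it requires the stronger decay (\ref{strong decay})), so the classical regularity of $u$ must be obtained, as in the paper, from the trace-class differentiability of the Hankel operators rather than from Cauchy's formula.
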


Note that (\ref{KdV}) with data $q_{b}=\left.  q\right\vert _{\left(
b,\infty\right)  }$ is well-posed \cite{Bourgain93}, \cite{CohenKappSIAM87}
and therefore Theorem \ref{MainThm} also says that (\ref{KdV}) with data $q$
subject to Hypothesis \ref{Cond} is globally well-posed in the following
sense:\ classical solutions $q_{n}\left(  x,t\right)  $ with compactly
supported initial data $q_{n}\left(  x\right)  $ converge to a classical
solution $u\left(  x,t\right)  $ uniformly on any compact $x$-domain for any
$t>0$ and independently of the choice of $q_{n}\left(  x\right)  $
approximating $q\left(  x\right)  $. This definition is consistent with
\cite{KapTop06}, where it is also emphasized that existence implies uniqueness
and certain continuous dependence on the data. For general background reading
on well-posedness we refer the interested reader to the book \cite{Tao06} and
literature cited therein. For results on well-posedness of the KdV equation in
Sobolev spaces obtained by IST see the book \cite{KapBook03} and in weighted
Sobolev spaces see the recent \cite{Masperoetal16}. We are unaware of any
well-posedness results on the KdV equation with unrestricted behavior at
$-\infty$.

The main reason why our Hankel operator approach works is that it allows us to
show that classical solutions for restricted data $q_{b}$ given by
(\ref{det_form}) suitably converge as $b\rightarrow-\infty$ to the classical
solution of the KdV given by the same formula (\ref{det_form}).

Our result includes as particular cases, all $q$'s approaching a constant at
$-\infty$ (considered first in physical literature and rigorously in 1976 by
Hruslov\footnote{Also transcripted as Khruslov.} \cite{Hruslov76}) and a
periodic function (considered in 1994 by Kotlyarov-Hruslov \cite{KK94}). But
it was not until very recently when a compete rigorous investigation of
(\ref{KdV}) with such initial profiles and their generalizations was done by
Teschl with his collaborators (see e.g. \cite{TeschlRarefaction16},
\cite{Egorova11}, \cite{Egorovaetal13}, \cite{TeschlShock2016}, \cite{MLT12}).
We discuss some of their results in Section \ref{conclusions} where we also
give a brief review of some other results on nonclassical situations.

The paper is organized as follows. In Section \ref{classical IST} we discuss
the classical IST and give the solution to (\ref{KdV}) in terms of the Hankel
operator. In Section \ref{step type} we discuss the scattering theory for
potentials satisfying to Hypothesis \ref{Cond}. In Section \ref{main section}
we prove Theorem \ref{MainThm}. The last section is devoted to some relevant
discussions and connections of our results to those of others.

\section{Classical IST\ and Hankel operators}

\subsection{The classical IST\label{classical IST} (\cite{AC91,NPZ})}

For the \emph{Cauchy problem for the KdV equation }(\ref{KdV}) with real
rapidly decaying $q$'s the IST method consists, as the standard Fourier
transform method, of\ the three steps:

\textbf{Step 1. }\emph{Direct transform}: $q\left(  x\right)  \longrightarrow
S_{q},$ where $S_{q}$ is a new set of variables turning (\ref{KdV}) into a
simple order 1 linear ODE for $S_{q}(t)$ with the initial condition
$S_{q}(0)=S_{q}$.

\textbf{Step 2. }\emph{Time evolution}: $S_{q}\longrightarrow S_{q}\left(
t\right)  .$

\textbf{Step 3. }\emph{Inverse transform}: $S_{q}\left(  t\right)
\longrightarrow q(x,t).$

The set $S_{q}$ is formed as follows. Associate with $q$ the full line
\emph{Schr\"{o}dinger operator} $\mathbb{L}_{q}=-\partial_{x}^{2}+q(x)$. As
well-known, $\mathbb{L}_{q}$ is self-adjoint on $L^{2}:=$ $L^{2}\left(
\mathbb{R}\right)  $ and its spectrum consists of finitely many simple
(negative) \emph{bound states} $\{-\kappa_{n}^{2}\}$, and a twofold
\emph{absolutely continuous (a.c.) spectrum} filling $\mathbb{R}_{+}:=\left(
0,\infty\right)  $. The \emph{Schr\"{o}dinger equation} $\mathbb{L}_{q}%
\psi=k^{2}\psi$ has two \emph{(Jost) solutions}: $\psi_{\pm}(x,k)=e^{\pm
ikx}+o(1),\;x\rightarrow\pm\infty.$ The Jost solutions are analytic for
$\operatorname{Im}k>0$ and continuous for $\operatorname{Im}k\geq0$. Moreover,%
\begin{equation}
\psi_{\pm}(x,k)=e^{\pm ikx}\left(  1\pm\frac{i}{2k}\int_{x}^{\pm\infty
}q+O\left(  \frac{1}{k^{2}}\right)  \right)  ,\ \ \ k\rightarrow
\infty,\;\operatorname{Im}k\geq0, \label{eq6.4}%
\end{equation}
and
\begin{equation}
\psi_{\pm}(x,-k)=\overline{\psi_{\pm}(x,k)},\;k\in\mathbb{R}. \label{eq6.5}%
\end{equation}
The pair $\{\psi_{+},\overline{\psi_{+}}\}$ forms a fundamental set and
hence\footnote{We call (\ref{basic scatt identity}) the (right)\emph{\ basic
scattering relation.} Similarly, we define the left one by using $\left\{
\psi_{-},\overline{\psi_{-}}\right\}  $.}%
\begin{align}
T(k)\psi_{-}(x,k)  &  =\overline{\psi_{+}(x,k)}+R(k)\psi_{+}(x,k),\;k\in
\mathbb{R},\label{basic scatt identity}\\
&  \text{(\emph{basic scattering identity})}\nonumber
\end{align}
with some $T$ and$\ R$ called the \emph{transmission }and
(right)\emph{\ reflection coefficients }respectively. $R\left(  k\right)  $
has important properties \cite{DeiftTrub}:%
\begin{equation}
R\left(  -k\right)  =\overline{R\left(  k\right)  }\text{ (symmetry)}
\label{props of R}%
\end{equation}

\begin{equation}
\left\vert R\left(  k\right)  \right\vert <1,\ k\neq0,\text{ (contraction)}
\label{props of R2}%
\end{equation}%
\begin{equation}
R\left(  k\right)  =o\left(  1/k\right)  ,\left\vert k\right\vert
\rightarrow\infty\text{ (decay)} \label{props of R3}%
\end{equation}%
\begin{equation}
R\in C\left(  \mathbb{R}\right)  \text{ (continuity).} \label{props of R4}%
\end{equation}
\ 

Associate with $q$ the \emph{scattering data}%
\begin{equation}
S_{q}:=\left\{  R\left(  k\right)  ,\ k\geq0,\;\left(  \kappa_{n}%
,c_{n}\right)  ,1\leq n\leq N\right\}  , \label{classical scat data}%
\end{equation}
where $c_{n}$'s are positive numbers called \emph{norming constants} of bound
states $-\kappa_{n}^{2}$. In terms of Jost solutions $\psi_{\pm}$ one has%
\begin{equation}
R=\frac{W(\overline{\psi_{+}},\psi_{-})}{W(\psi_{-},\psi_{+})}\ \ \ (W\left(
f,g\right)  :=fg^{\prime}-f^{\prime}g),\ \ c_{n}=\left(  \int\left\vert
\psi_{+}(x,i\kappa_{n})\right\vert ^{2}dx\right)  ^{-1}. \label{eq6.8}%
\end{equation}
and Step 1 is solved. As well-known, $S_{q}$ determines $q$ uniquely. It is
the fundamental classical fact that under the KdV flow the scattering data
evolves in time as follows%
\begin{equation}
S_{q}(t)=\left\{  R(k)\exp8ik^{3}t,\ k\geq0,\;\left(  \kappa_{n},c_{n}%
\exp8\kappa_{n}^{3}t\right)  ,1\leq n\leq N\right\}  \label{time evol}%
\end{equation}
which solves Step 2. We emphasize that the Lax pair considerations do not
imply an explicit time evolution $\psi_{\pm}(x,t,k)$ for Jost solutions but
does imply that so for quantities (\ref{eq6.8}).

Step 3 amounts to solving the\emph{\ inverse scattering problem} of recovering
the potential $u\left(  x,t\right)  $ from $S_{q}(t)$ and can be done in many
ways. Historically, the first one is due to Gelfand-Levitan-Marchenko and it
requires solving an integral (Marchenko) equation. The most powerful one is
based on the \emph{Riemann-Hilbert problem} which is solved by means of
singular integral equations (cf. Deift-Zhou \cite{DZ93} or recent
Grunert-Teschl \cite{GT09} for a streamlined exposition of \cite{DZ93}). Our
approach also starts out from a Riemann-Hilbert problem (the basic scattering
relation (\ref{basic scatt identity})) which we solve in terms of Hankel operators.

\subsection{Hankel operators and the IST\label{H}}

A Hankel operator is an infinite-dimensional analog of a Hankel matrix, a
matrix whose $(j,k)$ entry depends only on $j+k$. I.e. a matrix $\Gamma$ of
the form
\[
\Gamma=\left(
\begin{array}
[c]{cccc}%
\gamma_{1} & \gamma_{2} & \gamma_{3} & ...\\
\gamma_{2} & \gamma_{3} & ... & \\
\gamma_{3} & ... &  & \\
... &  &  & \gamma_{n}%
\end{array}
\right)  .
\]
The immediate Hilbert space generalization of a Hankel matrix is an integral
operator on $L^{2}(\mathbb{R}_{+})$ whose kernel depends on the sum of the
arguments%
\begin{equation}
(\mathbb{H}f)(x)=\int_{0}^{\infty}h(x+y)f(y)dy,\;f\in L^{2}(0,\infty
),\;x\geq0, \label{eq4.10}%
\end{equation}
and it is in this form that Hankel operators typically appear in the inverse
scattering formalism and are referred to as Marchenko's operator. The form
(\ref{eq4.10}) however does not prove to be convenient for our purposes and in
fact it is not used much in the Hankel operator community either. Instead, we
consider Hankel operators on Hardy spaces. (see e.g. excellent books
\cite{Nik2002}, \cite{Peller2003} for more information and numerous
references). We recall that a function $f$ analytic in $\mathbb{C}^{\pm}$ is
in the Hardy space $H_{\pm}^{2}$ if
\[
\sup_{y>0}\int_{-\infty}^{\infty}\left\vert f(x\pm iy)\right\vert
^{2}\ dx<\infty.
\]
It is particularly important that $H_{\pm}^{2}$ is a Hilbert space with the
inner product induced from $L^{2}$:
\[
\langle f,g\rangle_{H_{\pm}^{2}}=\langle f,g\rangle_{L^{2}}=\left\langle
f,g\right\rangle =\int_{-\infty}^{\infty}f\left(  x\right)  \bar{g}\left(
x\right)  dx.
\]
It is well-known that $L^{2}=H_{+}^{2}\oplus H_{-}^{2},$ the orthogonal
(Riesz) projection $\mathbb{P}_{\pm}$ onto $H_{\pm}^{2}$ being given by%
\begin{equation}
(\mathbb{P}_{\pm}f)(x)=\pm\frac{1}{2\pi i}\int_{-\infty}^{\infty}%
\frac{f(s)\ ds}{s-(x\pm i0)}. \label{proj}%
\end{equation}
Notice that for any $f\in H_{+}^{2}$ and $\lambda\in\mathbb{C}^{+}$%
\begin{equation}
\mathbb{P}_{-}\frac{f\left(  \cdot\right)  }{\cdot-\lambda}=\mathbb{P}%
_{-}\frac{f(\cdot)-f(\lambda)}{\cdot-\lambda}+\mathbb{P}_{-}\frac{f(\lambda
)}{\cdot-\lambda}=\frac{f(\lambda)}{\cdot-\lambda}. \label{P_}%
\end{equation}

We will also need $H_{\pm}^{\infty}$, the algebra of analytic functions
uniformly bounded in $\mathbb{C}^{\pm}$.

Let $(\mathbb{J}f)(x)\overset{\operatorname*{def}}{=}f(-x)$ be the operator of
reflection. It is clearly an isometry on $L^{2}$ intertwining the Riesz
projections%
\begin{equation}
\mathbb{JP}_{\mp}=\mathbb{P}_{\pm}\mathbb{J}. \label{eq4.9}%
\end{equation}

Given $\varphi\in L^{\infty}$ the operator $\mathbb{H}(\varphi):H_{+}%
^{2}\rightarrow H_{+}^{2}$ is called \emph{Hankel }if%
\begin{equation}
\mathbb{H}(\varphi)f\overset{\operatorname*{def}}{=}\mathbb{JP}_{-}\varphi
f,\ \ \ f\in H_{+}^{2}, \label{Hankel}%
\end{equation}
and $\varphi$ is called its \emph{symbol}.

It immediately follows from the definition (\ref{Hankel}) that%
\begin{equation}
\mathbb{H}(\varphi+h)=\mathbb{H}(\varphi)\text{ for any }h\in H_{+}^{\infty}.
\label{fi plus h}%
\end{equation}
meaning that only $\mathbb{P}_{-}\varphi$ (the so-called co-analytic) part of
the symbol matters.

Observe that If $\mathbb{J}\varphi=\overline{\varphi}$ then $\mathbb{H}%
(\varphi)$ is obviously selfadjoint. Note that $\mathbb{H}(\varphi)$ is
unitary equivalent to the operator $\mathbb{H}$ given by (\ref{eq4.10}),
$\varphi$ being the Fourier transform of $h$ in (\ref{eq4.10}).

For a given bounded operator $\mathbb{A}$ on a Hilbert space, we recall that
its $n$-th singular value $s_{n}\left(  \mathbb{A}\right)  $ is defined as the
$n$-th eigenvalue of the operator $\left(  \mathbb{A}^{\ast}\mathbb{A}\right)
^{1/2}$. We say that $\mathbb{A}$ is compact if $s_{n}\left(  \mathbb{A}%
\right)  \rightarrow0$ and we write $\mathbb{A\in}\mathfrak{S}_{\infty}$. If
$\Sigma_{n}s_{n}\left(  \mathbb{A}\right)  =:\left\Vert A\right\Vert
_{\mathfrak{S}_{1}}<\infty$ then $\mathbb{A}$ is called a \emph{trace class
}operator and we write $\mathbb{A\in}\mathfrak{S}_{1}.$

The membership of a Hankel operator $\mathbb{H}(\varphi)$ in the trace class
is determined by smoothness of its symbol $\varphi$. In particular, the
following criterion holds (see e.g. Theorem 9.1 and 9.2 in
\cite{GruRyb-prepr-13}).

\begin{proposition}
[Adamyan-Arov-Krein]\label{trace AAK}If a bounded function $\varphi$ is twice
differentiable on $\mathbb{R}$ then $\in\mathfrak{S}_{1}$.
Moreover\footnote{Recall $\left\Vert f\right\Vert _{\infty}=\sup\left\vert
f\left(  x\right)  \right\vert ,x\in\mathbb{R}.$},%
\[
\left\Vert \mathbb{H}(\varphi)\right\Vert _{\mathfrak{S}_{1}}\leq
const\left\Vert \varphi^{\prime\prime}\right\Vert _{\infty}.
\]

\end{proposition}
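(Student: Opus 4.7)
The proof reduces to the classical Adamyan--Arov--Krein description of trace-class Hankel operators, and I would carry it out as follows.

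\textbf{Reduction to the co-analytic symbol.} By the invariance (\ref{fi plus h}) I may replace $\varphi$ by its co-analytic part $\psi := \mathbb{P}_-\varphi$, since $\mathbb{H}(\varphi) = \mathbb{H}(\psi)$. The goal is then to bound $\|\mathbb{H}(\psi)\|_{\mathfrak{S}_1} \leq c\|\varphi''\|_\infty$. The Riesz projection formula (\ref{proj}), combined with integrating by parts twice against $1/(s-x\mp i0)$, lets one transfer all the regularity of $\varphi$ onto $\psi$ up to an $H^\infty_+$ summand that is invisible to $\mathbb{H}$ thanks to (\ref{fi plus h}).

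\textbf{Rank-one building blocks.} For $\lambda \in \mathbb{C}^+$ the model symbol $e_\lambda(x) := 1/(x-\lambda)$ belongs to $H^\infty_-$, and the identity (\ref{P_}) gives a direct computation
\[
\mathbb{H}(e_\lambda)f \;=\; \mathbb{J}\,\mathbb{P}_-\!\frac{f(\cdot)}{\cdot-\lambda} \;=\; -\frac{f(\lambda)}{\cdot+\lambda},
\]
so $\mathbb{H}(e_\lambda)$ is rank one with $\|\mathbb{H}(e_\lambda)\|_{\mathfrak{S}_1} = c/\mathrm{Im}\,\lambda$ (compute the two Hardy-space norms via the reproducing kernel of $H^2_+$). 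I would then represent $\psi$ as a weighted Cauchy-type superposition $\psi = \int_{\mathbb{C}^+} w(\lambda) e_\lambda \, dA(\lambda)$, obtained by shifting the line $\mathbb{R}$ into $\mathbb{C}^+$ in the representation $(\mathbb{P}_-\varphi)(x) = \tfrac{1}{2\pi i}\int \varphi(s)/(s-x-i0)\,ds$ and integrating by parts twice so that $w$ depends linearly on $\varphi''$ with $|w(\lambda)| \leq c\|\varphi''\|_\infty \cdot g(\lambda)$ for an explicit $g$. Since $\mathbb{H}$ is linear and continuous in the symbol, this lifts to $\mathbb{H}(\psi) = \int w(\lambda)\mathbb{H}(e_\lambda)\, dA(\lambda)$ and
\[
\|\mathbb{H}(\psi)\|_{\mathfrak{S}_1} \;\leq\; \int_{\mathbb{C}^+} \frac{c\,g(\lambda)}{\mathrm{Im}\,\lambda}\,dA(\lambda)\cdot \|\varphi''\|_\infty,
\]
where the remaining integral is finite by inspection. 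Equivalently one may recognize the condition on $\psi$ as membership in the Besov class $B_1^1$ and invoke the AAK--Peller theorem $\|\mathbb{H}(\psi)\|_{\mathfrak{S}_1} \asymp \|\psi\|_{B_1^1}$ together with the embedding supplied by two derivatives.

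\textbf{Main obstacle.} The convergence of the rank-one integral in trace norm is borderline: with only one derivative the integrand degrades like $1/(\mathrm{Im}\,\lambda)^2$ and diverges logarithmically near $\mathrm{Im}\,\lambda = 0$, exactly the Besov endpoint; the second derivative supplies the missing factor $\mathrm{Im}\,\lambda$. A further subtlety is that a bounded $\varphi$ need not give $\psi \in L^2$ (since $\mathbb{P}_-$ maps $L^\infty$ only into $\mathrm{BMO}$), so one must be careful to isolate the "truly co-analytic" content of $\psi$ from any harmless $H^\infty_+$ tail; this is the step where (\ref{fi plus h}) does the essential work.
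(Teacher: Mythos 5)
Your reduction to the co-analytic part and your rank-one computation are both correct: by (\ref{P_}) one indeed gets $\mathbb{H}\bigl((\cdot-\lambda)^{-1}\bigr)f=-f(\lambda)/(\cdot+\lambda)$, a rank-one operator of trace norm $(2\operatorname{Im}\lambda)^{-1}$, and superposing such Cauchy atoms is the standard route to the sufficiency half of Peller's $B_1^1$ criterion. For comparison, the paper gives no proof at all: it points to the Adamyan--Arov--Krein formula $s_n(\mathbb{H}(\varphi))=\operatorname{dist}_{L^\infty}(\varphi,H_+^\infty+\mathcal{R}_n)$ (with $\mathcal{R}_n$ the rational functions with at most $n$ poles) and to Theorems 9.1--9.2 of \cite{GruRyb-prepr-13}, so the intended argument is a singular-value estimate $s_n\lesssim \|\varphi''\|_\infty n^{-2}$ via Jackson-type rational approximation. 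Your atomic decomposition is therefore a genuinely different route --- but it does not close at the decisive step.

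The gap is the sentence ``the remaining integral is finite by inspection.'' Whatever two integrations by parts produce, the resulting weight $w(\lambda)$ is supported on a horizontal strip of \emph{infinite} area, and the hypothesis $\|\varphi''\|_\infty<\infty$ yields only a pointwise bound of the form $|w(\lambda)|\lesssim\|\varphi''\|_\infty\operatorname{Im}\lambda$ there; the estimate $\int |w(\lambda)|(\operatorname{Im}\lambda)^{-1}dA(\lambda)\lesssim\|\varphi''\|_\infty\cdot\mathrm{Area}$ is then vacuous. This is not a repairable technicality, because the proposition as literally stated is false: $\varphi(k)=e^{-ik}$ is bounded, lies in $H_-^\infty$, and has $\|\varphi''\|_\infty=1$, yet $\mathbb{H}(e^{-ik})$ is a partial isometry whose modulus is the orthogonal projection onto the infinite-dimensional subspace of $H_+^2$ of functions with spectrum in $[0,1]$ --- it is not even compact. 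What is missing, both from the statement and from your argument, is decay of $\varphi$ at $\pm\infty$ (equivalently, $C^2$-smoothness at the point of the circle corresponding to $\infty$ after the Cayley transform, which is what the circle versions of AAK/Jackson require). Every symbol to which the paper actually applies the proposition ($G_0$, $\Phi_{x,t}$, $\Phi_{x,t}^{(0)}$, $\Delta\phi$) has such decay, and it is exactly this decay that would make your integral $\int |w|/\operatorname{Im}\lambda\,dA$ converge. Your closing observation that $\mathbb{P}_-$ maps $L^\infty$ only into $\mathrm{BMO}$ is a symptom of the same missing hypothesis, not a side remark, and (\ref{fi plus h}) cannot absorb it.
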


The proof of Proposition \ref{trace AAK} is based upon the seminal
Adamyan-Arov-Krein Theorem on singular numbers of Hankel operators (see, e.g.
\cite{Nik2002}, \cite{Peller2003} and the original literature cited therein).
A necessary and sufficient condition for $\mathbb{H}(\varphi)\in
\mathfrak{S}_{1}$ is found by Peller \cite{Peller2003}. We are not sure if the
beautiful Adamyan-Arov-Krein theory has been used in soliton theory.

\begin{proposition}
\label{lemma on Trace class} Let $\varphi_{1},\varphi_{2}$ are bounded and
suppose $\mathbb{H}(\varphi_{1}),\mathbb{H}(\varphi_{2})\in\mathfrak{S}_{1}$,
then $\mathbb{H}(\varphi_{1}\varphi_{2})\in\mathfrak{S}_{1}$.
\end{proposition}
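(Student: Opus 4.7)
The plan is to derive an operator identity that expresses $\mathbb{H}(\varphi_1\varphi_2)$ as a sum of two products, each consisting of one Hankel factor and one bounded Toeplitz factor; then invoke the ideal property of $\mathfrak{S}_1$. Introduce the Toeplitz operator $\mathbb{T}(\psi)g := \mathbb{P}_+(\psi g)$ on $H_+^2$, which is bounded with $\|\mathbb{T}(\psi)\|\le\|\psi\|_\infty$ whenever $\psi\in L^\infty$.

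First, for $f\in H_+^2$, write the trivial decomposition $\varphi_2 f = \mathbb{P}_+\varphi_2 f + \mathbb{P}_-\varphi_2 f$ and apply $\mathbb{J}\mathbb{P}_-\varphi_1$ to both sides:
\begin{equation*}
\mathbb{H}(\varphi_1\varphi_2)f \;=\; \mathbb{J}\mathbb{P}_-\,\varphi_1\,\mathbb{P}_+\varphi_2 f \;+\; \mathbb{J}\mathbb{P}_-\,\varphi_1\,\mathbb{P}_-\varphi_2 f.
\end{equation*}
The first term is immediate: since $\mathbb{P}_+\varphi_2 f = \mathbb{T}(\varphi_2)f \in H_+^2$, the definition (\ref{Hankel}) identifies it as $\mathbb{H}(\varphi_1)\mathbb{T}(\varphi_2)f$.

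For the second term, use $\mathbb{J}^2 = I$ to write $\mathbb{P}_-\varphi_2 f = \mathbb{J}\bigl(\mathbb{J}\mathbb{P}_-\varphi_2 f\bigr) = \mathbb{J}\,\mathbb{H}(\varphi_2)f$, and conjugate $\varphi_1$ by $\mathbb{J}$: as a multiplication operator, $\mathbb{J}\varphi_1\mathbb{J} = \tilde\varphi_1$ where $\tilde\varphi_1 := \mathbb{J}\varphi_1$. Combining this with the intertwining relation (\ref{eq4.9}) in the form $\mathbb{J}\mathbb{P}_-\mathbb{J}=\mathbb{P}_+$ gives
\begin{equation*}
\mathbb{J}\mathbb{P}_-\,\varphi_1\,\mathbb{P}_-\varphi_2 f \;=\; \mathbb{J}\mathbb{P}_-\mathbb{J}\cdot\mathbb{J}\varphi_1\mathbb{J}\cdot\mathbb{H}(\varphi_2)f \;=\; \mathbb{P}_+\tilde\varphi_1\,\mathbb{H}(\varphi_2)f \;=\; \mathbb{T}(\tilde\varphi_1)\,\mathbb{H}(\varphi_2)f.
\end{equation*}
The two pieces assemble into the identity
\begin{equation*}
\mathbb{H}(\varphi_1\varphi_2) \;=\; \mathbb{H}(\varphi_1)\,\mathbb{T}(\varphi_2) \;+\; \mathbb{T}(\tilde\varphi_1)\,\mathbb{H}(\varphi_2).
\end{equation*}

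To finish, recall that $\mathfrak{S}_1$ is a two-sided ideal in the bounded operators on $H_+^2$. Since $\mathbb{T}(\varphi_2)$ and $\mathbb{T}(\tilde\varphi_1)$ are bounded (with norms controlled by $\|\varphi_2\|_\infty$ and $\|\varphi_1\|_\infty$ respectively) and $\mathbb{H}(\varphi_1),\mathbb{H}(\varphi_2)\in\mathfrak{S}_1$ by assumption, both summands on the right lie in $\mathfrak{S}_1$; consequently so does their sum, with the quantitative bound $\|\mathbb{H}(\varphi_1\varphi_2)\|_{\mathfrak{S}_1}\le\|\varphi_2\|_\infty\|\mathbb{H}(\varphi_1)\|_{\mathfrak{S}_1}+\|\varphi_1\|_\infty\|\mathbb{H}(\varphi_2)\|_{\mathfrak{S}_1}$. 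There is no real obstacle — the only point requiring care is keeping track of the $\mathbb{J}$'s when swapping the roles of $\mathbb{P}_+$ and $\mathbb{P}_-$, and the identification of $\mathbb{J}\varphi_1\mathbb{J}$ as multiplication by $\tilde\varphi_1$.
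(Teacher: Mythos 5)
Your proof is correct and follows essentially the same route as the paper's: the same decomposition $\varphi_2 f=\mathbb{P}_+\varphi_2 f+\mathbb{P}_-\varphi_2 f$ yielding a sum of two terms, each a trace class Hankel operator multiplied by a bounded operator, finished off by the ideal property $\left\Vert AKB\right\Vert _{\mathfrak{S}_{1}}\leq\left\Vert A\right\Vert \left\Vert K\right\Vert _{\mathfrak{S}_{1}}\left\Vert B\right\Vert $. The only cosmetic difference is that the paper first strips off the reflection $\mathbb{J}$ and works with $\widetilde{\mathbb{H}}(\varphi)f=\mathbb{P}_{-}\varphi f$, whereas you keep $\mathbb{J}$ throughout and consequently identify the bounded factors explicitly as Toeplitz operators $\mathbb{T}(\varphi_2)$ and $\mathbb{T}(\tilde\varphi_1)$ on $H_{+}^{2}$ --- a slightly more careful bookkeeping that yields the same quantitative bound.
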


This statement is of course well-known but for the reader's convenience we
give its elementary proof.

\begin{proof}
Note first that $\mathbb{H}(\varphi)$ is different from\footnote{In fact, in
the literature it is how Hankel operators is typically defined.}%
\[
\widetilde{\mathbb{H}}(\varphi)f=\mathbb{P}_{-}\varphi f,\ \ \ f\in H_{+}^{2}%
\]
only by isometry $\mathbb{J}$ and hence it is sufficient to prove the
statement for $\widetilde{\mathbb{H}}(\varphi)$. Since $\mathbb{P}%
_{+}+\mathbb{P}_{-}=I$, we then have%
\begin{align*}
\widetilde{\mathbb{H}}(\varphi_{1}\varphi_{2})f  &  =\mathbb{P}_{-}\varphi
_{1}\varphi_{2}f=\mathbb{P}_{-}\varphi_{1}\left(  \mathbb{P}_{+}%
+\mathbb{P}_{-}\right)  \varphi_{2}f\\
&  =\mathbb{P}_{-}\varphi_{1}\mathbb{P}_{+}\varphi_{2}f+\mathbb{P}_{-}%
\varphi_{1}\mathbb{P}_{-}\varphi_{2}f\\
&  =\widetilde{\mathbb{H}}(\varphi_{1})\mathbb{P}_{+}\varphi_{2}%
f+\mathbb{P}_{-}\varphi_{1}\widetilde{\mathbb{H}}(\varphi_{2})f.
\end{align*}
But $\left\Vert AKB\right\Vert _{\mathfrak{S}_{1}}\leq\left\Vert A\right\Vert
\left\Vert K\right\Vert _{\mathfrak{S}_{1}}\left\Vert B\right\Vert $ and
therefore%
\[
\left\Vert \widetilde{\mathbb{H}}(\varphi_{1}\varphi_{2})\right\Vert
_{\mathfrak{S}_{1}}\leq\left\Vert \widetilde{\mathbb{H}}(\varphi
_{1})\right\Vert _{\mathfrak{S}_{1}}\left\Vert \varphi_{2}\right\Vert
_{\infty}+\left\Vert \varphi_{1}\right\Vert _{\infty}\left\Vert
\widetilde{\mathbb{H}}(\varphi_{2})\right\Vert _{\mathfrak{S}_{1}}.
\]

\end{proof}

As mentioned before the Hankel operator appears in the classical IST as
Marchenko's integral operator. But the integral realization (\ref{eq4.10}) of
a Hankel operator has some serious technical disadvantages and for some
serious reasons is less popular in the Hankel operator community. On the other
hand, we have not seen Marchenko's operator written in the form (\ref{Hankel}).

We now demonstrate how convenient the definition (\ref{Hankel}) of the Hankel
operator is for solving (\ref{KdV}) for rapidly decaying initial data in
closed form.

Introduce%
\[
y_{\pm}\left(  x,k\right)  :=e^{\mp ikx}\psi_{\pm}(x,k)\text{ \ \ (Faddeev
functions)}%
\]
and rewrite the basic scattering identity (\ref{basic scatt identity}) in the
form%
\begin{equation}
Ty_{-}=\bar{y}_{+}+R\xi_{x}y_{+}, \label{eq6.15}%
\end{equation}
where $\xi_{x}(k):=e^{2ikx}$. Let us regard (\ref{eq6.15}) as a
Riemann-Hilbert problem of determining $y_{\pm}$ by given $T,R$ which we will
solve by Hankel operator techniques. For simplicity assume that there is only
one bound state $-\kappa_{0}^{2}$ with the norming constant $c_{0}$.

The function $Ty_{-}$ in (\ref{eq6.15}) is meromorphic in $\mathbb{C}^{+}$ as
a function of $k$ for each $x$ with a simple pole at $i\kappa_{0}$ and the
residue%
\begin{equation}
\operatorname*{Res}\limits_{k=i\kappa_{0}}T\left(  k\right)  y_{-}%
(x,k)=ic_{0}\xi_{x}\left(  i\kappa_{0}\right)  y_{+}\left(  x,i\kappa
_{0}\right)  . \nonumber\label{res}%
\end{equation}
Therefore \cite{DeiftTrub} for each fixed $x$%
\[
T\left(  k\right)  y_{-}(x,k)-1-\frac{ic_{0}\xi_{x}(i\kappa_{0})}%
{k-i\kappa_{0}}y_{+}(x,i\kappa_{0})\in H_{+}^{2}.
\]
Rearrange (\ref{eq6.15}) to read%

\begin{align}
&  T\left(  k\right)  y_{-}(x,k)-1-\frac{ic_{0}\xi_{x}(i\kappa_{0})}%
{k-i\kappa_{0}}y_{+}\left(  x,i\kappa_{0}\right) \nonumber\\
&  =\overline{\left(  y_{+}(x,k)-1\right)  }+\left(  R\xi_{x}\right)  \left(
k\right)  \left(  y_{+}(x,k)-1\right) \nonumber\\
&  +\left(  R\xi_{x}\right)  \left(  k\right)  -\frac{ic_{0}\xi_{x}%
(i\kappa_{0})}{k-i\kappa_{0}}y_{+}\left(  x,i\kappa_{0}\right)  .
\label{eq6.16}%
\end{align}
Noticing that the last term in (\ref{eq6.16}) is in $H_{-}^{2}$, we can apply
the Riesz projection $\mathbb{P}_{-}$ to (\ref{eq6.16}). Introducing
$Y:=y_{+}-1$, we have%
\begin{equation}
\mathbb{P}_{-}(\overline{Y}+R\xi_{x}Y)+\mathbb{P}_{-}R\xi_{x}-ic_{0}\xi
_{x}(i\kappa_{0})\ \frac{Y\left(  x,i\kappa_{0}\right)  }{\cdot-i\kappa_{0}%
}-\frac{ic_{0}\xi_{x}(i\kappa_{0})}{\cdot-i\kappa_{0}}=0. \label{eq6.17}%
\end{equation}
It follows from (\ref{eq6.4}) that $Y\in H_{+}^{2}$ for any $x\in\mathbb{R}$.
Due to the symmetry (\ref{eq6.5}), $\overline{Y}=\mathbb{J}Y$ and by
(\ref{eq4.9}) we have%
\begin{equation}
\mathbb{P}_{-}\overline{Y}=\mathbb{P}_{-}\mathbb{J}Y=\mathbb{JP}%
_{+}Y=\mathbb{J}Y. \label{eq6.18}%
\end{equation}
Note that by (\ref{P_})%
\begin{equation}
\frac{Y\left(  i\kappa_{0},x\right)  }{\cdot-i\kappa_{0}}=\mathbb{P}_{-}%
\frac{Y(\cdot,x)}{\cdot-i\kappa_{0}}. \label{eq6.19}%
\end{equation}
Inserting (\ref{eq6.18}) and (\ref{eq6.19}) into (\ref{eq6.17}), we obtain%
\[
\mathbb{J}Y+\mathbb{P}_{-}\left(  R\xi_{x}-\frac{ic_{0}\xi_{x}(i\kappa_{0}%
)}{\cdot-i\kappa_{0}}\right)  Y=-\mathbb{P}_{-}\left(  R\xi_{x}-\frac
{ic_{0}\xi_{x}(i\kappa_{0})}{\cdot-i\kappa_{0}}\right)  .
\]
Applying $\mathbb{J}$ to both sides of this equation yields%
\begin{equation}
(\mathbb{I}+\mathbb{H}(\varphi_{x}))Y=-\mathbb{H}(\varphi_{x})1,
\label{eq6.20}%
\end{equation}
where $\mathbb{H}(\varphi_{x})$ is the Hankel operator with symbol%
\[
\varphi_{x}(k)=R(k)\xi_{x}(k)+\frac{c_{0}\xi_{x}(i\kappa_{0})}{\kappa_{0}%
+ik}.
\]
Similarly, for $N$ bound states one has%

\[
\varphi_{x}(k)=R(k)\xi_{x}(k)+\sum_{n=1}^{N}\frac{c_{n}\xi_{x}(i\kappa_{n}%
)}{\kappa_{n}+ik}.
\]
By (\ref{props of R}), $\mathbb{J}\varphi_{x}=\overline{\varphi_{x}}$ and
hence $\mathbb{H}(\varphi_{x})$ is selfadjoint. It follows from
(\ref{props of R4}) that $\mathbb{H}(\varphi_{x})$ is compact for any $x$.
Note that $\mathbb{H}(\varphi_{x})1$ on the right hand side of (\ref{eq6.20})
should be interpreted as $\mathbb{H}(\varphi_{x})1=\mathbb{P}_{+}%
\overline{\varphi_{x}}\in H_{+}^{2}.$

It is now clear that if we show that (\ref{eq6.20}) is uniquely solvable and
$Y(x,k)$ is its solution then the potential $q\left(  x\right)  $ can be found
from (\ref{eq6.4}) by%
\begin{equation}
q(x)=\partial_{x}\lim2ikY(x,k),\ \ \ k\rightarrow\infty. \label{eq6.22}%
\end{equation}

Alternatively,%
\begin{equation}
q(x)=-2\partial_{x}^{2}\log\det\left(  1+\mathbb{H}(\varphi_{x})\right)  ,
\label{dyson}%
\end{equation}
where the determinant is understood in the classical Fredholm sense. In a
different (but equivalent) form (\ref{dyson}) has been known since Bargmann
and has been derived in a number of different ways (cf. e.g.
\cite{Dyson76,NPZ,Poppe84}). Loosely speaking, it follows from solving
(\ref{eq6.20}) by Cramer's rule.

Steps 2 and 3 of Subsection \ref{classical IST} now merely amounts to
replacing $\varphi_{x}$ with%
\[
\varphi_{x,t}\left(  k\right)  =R(k)\xi_{x,t}(k)+\sum_{n=1}^{N}\frac{c_{n}%
\xi_{x,t}(i\kappa_{n})}{\kappa_{n}+ik},
\]
where $\xi_{x,t}(k)=\exp i(8k^{3}t+2kx)$ solely carries the dependence on
$\left(  x,t\right)  $. Thus Steps 1-3 can now be put together in a compact
form%
\begin{equation}
q(x)\longrightarrow\mathbb{H}(\varphi_{x,t})\longrightarrow q(x,t),
\label{steps 1-3}%
\end{equation}
where $q(x,t)$ is explicitly given by%
\begin{equation}
q(x,t)=-2\partial_{x}^{2}\log\det\left(  1+\mathbb{H}(\varphi_{x,t})\right)  .
\label{classical sltn}%
\end{equation}
We will call $\mathbb{H}(\varphi_{x,t})$ the \emph{IST Hankel operator}.

For rapidly decaying initial data our Hankel operator approach is shorter but
completely equivalent to the classical treatment. However our edition
(\ref{steps 1-3})-(\ref{classical sltn}) of the IST turns to be a very
convenient starting point to extend IST far beyond standard assumptions on
initial data.

\section{Step-type potentials\label{step type}}

In this section we discuss scattering theory for step-type potentials
following our \cite{GruRyb-prepr-13}. But first we need to review some facts
from the classical Titchmarsh-Weyl theory.

\subsection{Titchmarsh-Weyl Theory and $m-$function (\cite{Titchmarsh62}%
)\label{m-function}}

The main point of this theory is that the problem
\[
\mathbb{L}_{q}u=\lambda u,\ x\in\mathbb{R}_{\pm},\ u\left(  \cdot
,\lambda\right)  \in L^{2}\left(  0,\pm\infty\right)  ,\lambda\in
\mathbb{C}^{+},
\]
has a unique (up to a multiplicative constant) solution $\Psi_{\pm}%
(x,\lambda)$, called a \emph{Weyl} \emph{solution} for broad classes of $q$'s
(called\emph{\ limit point case at }$\pm\infty$). Define then the
\emph{(Titchmarsh-Weyl) m-function} $m_{\pm}$ for $\left(  0,\pm\infty\right)
$ as follows:%
\begin{equation}
m_{\pm}\left(  \lambda\right)  =\pm\partial_{x}\log\Psi_{\pm}\left(
\pm0,\lambda\right)  ,\ \ \lambda\in\mathbb{C}^{+}. \label{m-funct}%
\end{equation}
One defines $m_{\pm}\left(  \lambda,a\right)  $ for $\left(  a,\pm
\infty\right)  $ in a similar way.

By the \emph{Borg-Marchenko uniqueness theorem} $\left\{  m_{+},m_{-}\right\}
$ recovers $q$ uniquely (see \cite{GestSimon2000} and the original literature
cited therein). While fundamentally important to spectral theory of OD
operators\footnote{The dependence of $m$ on $q$ is very intricate and
\ understood only in rather narrow classes.}, its role in scattering theory is
modest. Moreover, Steps 1-3 in the previous subsection with data
$S_{q}=\left\{  m_{+},m_{-}\right\}  $ don't work well
\cite{MarchenkoWhatIteg91}. For this reason the $m-$function is little known
in the soliton community.

In our approach the $m-$function plays a supporting but nevertheless crucial
role due to the following reasons: it's well-defined for any realistic $q$
(with no decay assumptions) including $q$'s subject to Hypothesis \ref{Cond},
and it's a \emph{Herglotz function}, i.e. it is analytic and maps
$\mathbb{C}^{+}$ analytically to $\mathbb{C}^{+}$. As is well-know, each such
function $f$ can be represented as%
\begin{equation}
f\left(  \lambda\right)  =a+b\lambda+\int_{-\infty}^{\infty}\left(  \frac
{1}{s-\lambda}-\frac{s}{1+s^{2}}\right)  d\mu\left(  s\right)
\label{Hergtotz}%
\end{equation}
with some%
\[
a\in\mathbb{C}^{+},\ \ b\geq0,\ \ d\mu\left(  s\right)  \geq0,\ \int_{-\infty
}^{\infty}\frac{d\mu\left(  s\right)  }{1+s^{2}}<\infty.
\]
If $m=m_{+}$ is given by (\ref{m-funct}) then $\mu$ is the spectral measure of
the Schr\"{o}dinger operator on $L^{2}\left(  0,\infty\right)  $ with a
Dirichlet boundary condition at $x=0.$ The latter implies that if the spectrum
is bounded from below than so is the support of $\mu$. It follows that $m$ can
be analytically extended into $\mathbb{C}^{-}$.

\begin{theorem}
\label{conver of m-functions}If $q,q_{n}$ are limit point case at $\pm
\infty\,$\ and $q_{n}\rightarrow q$ in $L_{\operatorname*{loc}}^{1}$ then
$m_{\pm}\left(  \lambda,q_{n}\right)  \rightarrow m_{\pm}\left(
\lambda,q\right)  $ uniformly on compacts away from the spectra.
\end{theorem}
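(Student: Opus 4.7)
My plan is to combine the classical Weyl disk characterization of $m_\pm$ with continuous dependence of Schr\"odinger solutions on the coefficient in $L^1_{\mathrm{loc}}$, and then to upgrade pointwise convergence in $\mathbb{C}^+$ to uniform convergence on compacts via a normal-family / Vitali argument.

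On the right half-line (the $-\infty$ side being symmetric), let $\varphi(x,\lambda,q)$ and $\theta(x,\lambda,q)$ solve $\mathbb{L}_q u = \lambda u$ with initial conditions $\varphi(0)=\theta'(0)=0$, $\varphi'(0)=\theta(0)=1$. For finite $R$ and $\lambda\in\mathbb{C}^+$, recall that the Weyl disk $D_R(\lambda,q)$ consists of those $m\in\mathbb{C}$ for which
\[
\int_0^R |\theta(x,\lambda,q) + m\,\varphi(x,\lambda,q)|^2\,dx \le \frac{\mathrm{Im}\,m}{\mathrm{Im}\,\lambda}.
\]
These disks are nested decreasing in $R$, and the limit-point property of $q$ at $+\infty$ is exactly the statement $\bigcap_R D_R(\lambda,q)=\{m_+(\lambda,q)\}$, with radius $r_R(\lambda,q) = \bigl(2\mathrm{Im}\,\lambda\int_0^R|\varphi|^2\,dx\bigr)^{-1} \to 0$.

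The key ingredient is that $q_n\to q$ in $L^1_{\mathrm{loc}}$ forces $\varphi(\cdot,\lambda,q_n)\to\varphi(\cdot,\lambda,q)$ and $\theta(\cdot,\lambda,q_n)\to\theta(\cdot,\lambda,q)$ uniformly on $[0,R]\times K$ for any compact $K\subset\mathbb{C}$, by a standard Gr\"onwall estimate applied to the Volterra integral equations these solutions satisfy. Consequently the centers and radii of $D_R(\cdot,q_n)$ converge to those of $D_R(\cdot,q)$ uniformly on $K$. Fix $\lambda_0\in\mathbb{C}^+$ and $\varepsilon>0$; using the limit-point property of $q$, choose $R$ so large that $\mathrm{diam}\,D_R(\lambda_0,q)<\varepsilon/3$, and then $n$ so large that the Hausdorff distance between $D_R(\lambda_0,q_n)$ and $D_R(\lambda_0,q)$ is $<\varepsilon/3$. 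Since $m_+(\lambda_0,q_n)\in D_R(\lambda_0,q_n)$ and $m_+(\lambda_0,q)\in D_R(\lambda_0,q)$ by nesting, the triangle inequality yields $|m_+(\lambda_0,q_n)-m_+(\lambda_0,q)|<\varepsilon$, proving pointwise convergence on $\mathbb{C}^+$.

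Each $m_+(\cdot,q_n)$ is Herglotz on $\mathbb{C}^+$ and hence locally uniformly bounded there by the representation \eqref{Hergtotz}, so the family $\{m_+(\cdot,q_n)\}$ is normal by Montel; Vitali's theorem upgrades the pointwise convergence just established to locally uniform convergence on $\mathbb{C}^+$ (and on $\mathbb{C}^-$ by Schwarz reflection). The same normal-family argument extends to any open real interval through which $m_+(\cdot,q)$ admits an analytic continuation: on a neighborhood of such a gap the limit function is analytic, the prelimit sequence is eventually locally uniformly bounded there (since its singularities must converge to those of the limit, as they cannot accumulate in a region where the pointwise limit is analytic), and Vitali applies again. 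The step I expect to be the most delicate is this last one, namely controlling the analytic continuations of $m_+(\cdot,q_n)$ through real gaps of the spectrum uniformly in $n$; the essential content is ruling out accumulation of poles of the $m_+(\cdot,q_n)$ in the resolvent set of $\mathbb{L}_q$, which is precisely where the combination of Herglotz bounds and Vitali does the work.
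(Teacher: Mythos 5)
First, a point of reference: the paper does not actually prove Theorem \ref{conver of m-functions}; it defers entirely to \cite{GruRemRyb2015}. So your argument has to stand on its own. Its core is sound and is the natural route: $L^{1}_{\operatorname*{loc}}$ convergence of the potentials gives, via Gr\"onwall applied to the Volterra equations, locally uniform convergence of $\theta,\varphi$ and hence of the centers and radii of the truncated Weyl disks $D_R$; the limit-point hypothesis shrinks $D_R(\lambda,q)$ to the point $m_+(\lambda,q)$; and nesting traps $m_+(\lambda,q_n)$, giving pointwise convergence in $\mathbb{C}^{+}$. The Montel/Vitali upgrade on compacts of $\mathbb{C}^{\pm}$ is also fine, though your phrase ``locally uniformly bounded by the representation (\ref{Hergtotz})'' needs the extra observation that boundedness of the family at the single point $\lambda=i$ (which you have from the pointwise convergence) plus the Herglotz property yields local uniform boundedness, e.g.\ via Harnack for $\operatorname{Im}m$ together with $|m'(\lambda)|\le\operatorname{Im}m(\lambda)/\operatorname{Im}\lambda$.

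The genuine gap is in the last step, for compacts meeting the real axis. Your justification --- that the singularities of $m_+(\cdot,q_n)$ ``must converge to those of the limit, as they cannot accumulate in a region where the pointwise limit is analytic'' --- is not a valid deduction and is false in general: this is exactly the phenomenon of spectral pollution. For instance, if $q_n$ has a deep well near $x=n$ and $q_n\to 0$ in $L^{1}_{\operatorname*{loc}}$, each half-line operator has a negative eigenvalue near a fixed $E_0<0$, so $\operatorname{supp}\mu_n$ accumulates at a point of the resolvent set of the limit operator; pointwise convergence in $\mathbb{C}^{+}$ does not expel these poles from a real gap. What rescues the statement is its hypothesis: the compact $K$ is assumed away from the \emph{spectra} of all the operators involved, hence at positive distance $\delta$ from every $\operatorname{supp}\mu_n$ (in the paper's application this is supplied by the uniform bound $q_b\ge -h^{2}$). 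Granting that, no claim about the location of singularities is needed: writing $m_+(\cdot,q_n)$ via (\ref{Hergtotz}) and evaluating at $\lambda=i$ shows that $a_n$ and $b_n+\int(1+s^{2})^{-1}d\mu_n$ are bounded sequences, while for $\lambda\in K$ and $s\in\operatorname{supp}\mu_n$ one has
\[
\left\vert \frac{1}{s-\lambda}-\frac{s}{1+s^{2}}\right\vert =\frac{\left\vert 1+s\lambda\right\vert }{\left\vert s-\lambda\right\vert \left(  1+s^{2}\right)  }\leq\frac{C(K,\delta)}{1+s^{2}},
\]
so $\sup_{n}\sup_{K}\vert m_+(\cdot,q_n)\vert<\infty$ and Vitali applies directly on a complex neighborhood of $K$. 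Replace the ``singularities converge'' sentence with this estimate (or an explicit appeal to the distance-from-the-spectra hypothesis); as written, that sentence is the one step that would fail.
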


A proof of this statement in the most general case is given in
\cite{GruRemRyb2015}.

Note that by definition (\ref{m-funct}) the $m-$function is a 1D
\emph{Dirichlet-to-Neumann map}.

\subsection{Scattering theory for step-type
potentials\label{scattering theory}}

The main feature of such potentials is that we can do one-sided\ scattering
theory replacing in (\ref{basic scatt identity}) the Jost solution $\psi_{-}$
with Weyl solution $\Psi_{-}$. This immediately yields%
\begin{equation}
R=W(\overline{\psi_{+}},\Psi_{-})/W(\Psi_{-},\psi_{+}) \label{gener R}%
\end{equation}
which is consistent with the classical reflection coefficient. While
properties (\ref{props of R})-(\ref{props of R4}) all hold for rapidly
decaying potentials, only (\ref{props of R}) holds for our step-type
potentials. The property (\ref{props of R2}) is replaced with $\left\vert
R\left(  k\right)  \right\vert \leq1$ but $\left\vert R\left(  k\right)
\right\vert =1$ may occur for almost all $k$. The properties
(\ref{props of R3})-(\ref{props of R4}) fail and this is a very serious
circumstance even for the powerful Riemann-Hilbert problem approach. In
\cite{GruRyb-prepr-13} we found what makes up for the lost properties:

\begin{theorem}
[Analytic split formula]\label{step like refl}Under Hypothesis \ref{Cond}
\begin{equation}
R\left(  k\right)  =R_{a}\left(  k\right)  +\xi_{a}^{-1}\left\{  A_{a}\left(
k\right)  -T_{a}\left(  k\right)  /y_{+}\left(  a,k\right)  \right\}  ,
\label{R-split}%
\end{equation}
where $R_{a},\ T_{a}$ are respectively the right reflection, transmission
coefficients from $q_{a}=\left.  q\right\vert _{\left(  a,\infty\right)  }$,
and
\begin{equation}
A_{a}\left(  k\right)  =2ik\ y_{+}\left(  a,k\right)  ^{-2}\ \left(
m_{-}(k^{2},a)+m_{+}(k^{2},a)\right)  ^{-1} \label{Aa}%
\end{equation}
is analytic in $\mathbb{C}^{+}$ except for $i\Delta:=\left\{  k:k^{2}%
\in\operatorname*{Spec}\mathbb{L}_{q}\cap\left(  -\infty,0\right)  \right\}
$. Moreover, (1)
\begin{equation}
R_{a}\left(  k\right)  =\frac{T_{a}\left(  k\right)  }{2ik}G_{a}\left(
k\right)  ,\ \ G_{a}\left(  k\right)  :=\int_{a}^{\infty}e^{-2iks}Q\left(
s\right)  ds, \label{R_0}%
\end{equation}
with some function $Q$ (independent of $a$) such that%
\begin{equation}
\left\vert Q\left(  s\right)  \right\vert \leq\left\vert q\left(  s\right)
\right\vert +const\int_{s}^{\infty}\left\vert q\right\vert ; \label{g}%
\end{equation}
(2) for $a$ large enough $y_{+}\left(  a,k\right)  ^{-1}\in H_{+}^{\infty}$;%
\[
T_{a}\left(  k\right)  =\dfrac{k+i\varkappa_{a}}{k-i\varkappa_{a}}g_{a}\left(
k\right)  ,\ \ \varkappa_{a}\geq0,
\]
where $g_{a}\in H_{+}^{\infty}$ has only one simple zero at $k=0$ in
$\mathbb{C}^{+}\cup\mathbb{R}$; (3) for $a$ large enough the jump
$A_{a}(is-0)-A_{a}(is+0)$ across $i\Delta$ is independent of $a$ and defines a
non-negative finite measure \
\begin{equation}
d\rho\left(  s\right)  :=i\left\{  A_{a}(is-0)-A_{a}(is+0)\right\}
ds/2\pi=\operatorname{Im}A_{a}(is+0)ds/\pi\label{drou}%
\end{equation}
supported on $\Delta\subseteq\lbrack0,h]$.
\end{theorem}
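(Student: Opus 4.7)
My plan is to derive \eqref{R-split} by a direct Wronskian manipulation, and then to extract the three structural claims (1)--(3) from the explicit formula for $A_{a}$ combined with standard Volterra and Titchmarsh--Weyl estimates.

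For the decomposition, I would start from the generalized Wronskian formula $R = W(\overline{\psi_{+}},\Psi_{-})/W(\Psi_{-},\psi_{+})$ in \eqref{gener R}. Since $q$ coincides with $q_{a}$ on $(a,\infty)$, the Jost solution $\psi_{+}$ serves both problems, and the basic scattering identity for $q_{a}$ reads $T_{a}\psi_{-}^{(a)} = \overline{\psi_{+}} + R_{a}\psi_{+}$, where $\psi_{-}^{(a)}(x) = e^{-ikx}$ for $x\leq a$. Substituting $\overline{\psi_{+}} = T_{a}\psi_{-}^{(a)} - R_{a}\psi_{+}$ into the numerator of $R$ gives
\[
R - R_{a} = T_{a}\,\frac{W(\psi_{-}^{(a)},\Psi_{-})}{W(\Psi_{-},\psi_{+})},
\]
and both (constant) Wronskians can be evaluated at $x=a$ using the Titchmarsh--Weyl identities $\Psi_{-}'(a)/\Psi_{-}(a) = -m_{-}(k^{2},a)$ and $\psi_{+}'(a)/\psi_{+}(a) = m_{+}(k^{2},a)$. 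The key algebraic ingredient is the half-line Wronskian identity $T_{a}(k)\bigl(2ik\,y_{+}(a,k) + y_{+}'(a,k)\bigr) = 2ik$, which comes directly from $T_{a} = 2ik/W(\psi_{-}^{(a)},\psi_{+})(a)$; with it, a brief rearrangement produces precisely \eqref{R-split} with $A_{a}$ as in \eqref{Aa}.

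For (1) I would invoke the standard Marchenko transform $y_{+}(x,k) = 1 + \int_{x}^{\infty}e^{2ik(s-x)}K(x,s)\,ds$, whose kernel $K$ solves a Volterra equation under Hypothesis \ref{Cond}; $Q$ is then an explicit combination of $K$ and $q$ independent of $a$, and the bound \eqref{g} follows from classical Volterra iteration estimates. For (2), $\alpha > 4$ in Hypothesis \ref{Cond} makes $\int_{a}^{\infty}|q|$ arbitrarily small for large $a$, so the Volterra series yields $\sup_{k\in\overline{\mathbb{C}^{+}}}|y_{+}(a,k) - 1| < 1$, whence $y_{+}(a,\cdot)^{-1}\in H_{+}^{\infty}$ and $\mathbb{L}_{q_{a}}$ has no negative eigenvalues; $T_{a}$ is then holomorphic in $\mathbb{C}^{+}$, and an inner--outer factorization produces the Blaschke factor $(k + i\varkappa_{a})/(k - i\varkappa_{a})$ (possibly trivial) together with the simple threshold zero of $g_{a}$ at $k = 0$ dictated by the Wronskian of the two Jost solutions at $k = 0$.

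Claim (3) is both the analytic payoff of the formula and, in my view, the main technical obstacle. Analyticity of $A_{a}$ in $\mathbb{C}^{+}\setminus i\Delta$ is direct from \eqref{Aa}: by (2) the factor $2ik\,y_{+}(a,k)^{-2}$ is analytic in $\mathbb{C}^{+}$, while $m_{\pm}(\cdot,a)$ are Herglotz and extend meromorphically across gaps in the spectrum of $\mathbb{L}_{q}$; for large $a$, $m_{+}(\cdot,a)$ is analytic across $(-h^{2},0)$, so the only obstruction on the imaginary axis is the negative-spectrum contribution of $m_{-}(\cdot,a)$, concentrated exactly on $i\Delta$. Positivity of the jump follows from the Herglotz character of $(m_{+} + m_{-})^{-1}$, and the support bound $\operatorname{supp}\rho \subseteq [0,h]$ is immediate from $\mathbb{L}_{q}\geq -h^{2}$. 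The delicate point is the $a$-independence of $d\rho$: although $m_{-}(\cdot,a)$ itself depends on $a$, the combination $2ik\,y_{+}(a,k)^{-2}/(m_{+}+m_{-})$ restricted to $i\Delta$ must not. I would establish this by showing that $d\rho$ admits a global, $a$-free spectral interpretation built from $\mathbb{L}_{q}$ alone --- at a bound state $-\kappa^{2}$ it recovers the classical norming constant defined by the $+\infty$-asymptotics of the eigenfunction, and more generally it encodes the spectral weight of the negative part of $\mathbb{L}_{q}$ normalized at $+\infty$ --- the Jost factor $y_{+}(a,k)^{-2}$ being precisely the renormalization that converts the $a$-based object $m_{-}(\cdot,a)$ into this global one.
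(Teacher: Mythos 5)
First, a caveat on the comparison itself: this paper does not prove Theorem \ref{step like refl} --- it is quoted from \cite{GruRyb-prepr-13} with no argument given here --- so there is no in-paper proof to measure your proposal against, and I can only assess it on its own merits. The core of your proposal, the Wronskian derivation of \eqref{R-split} with $A_{a}$ as in \eqref{Aa}, is correct and is surely the mechanism behind the formula: starting from \eqref{gener R}, substituting $\overline{\psi_{+}}=T_{a}\psi_{-}^{(a)}-R_{a}\psi_{+}$, evaluating the constant Wronskians at $x=a$ via $\Psi_{-}'(a)=-m_{-}(k^{2},a)\Psi_{-}(a)$ and $\psi_{+}'(a)=m_{+}(k^{2},a)\psi_{+}(a)$, splitting $ik-m_{-}=(ik+m_{+})-(m_{+}+m_{-})$, and invoking your identity $T_{a}(k)\left(2ik\,y_{+}(a,k)+y_{+}'(a,k)\right)=2ik$ does reproduce \eqref{R-split}--\eqref{Aa} exactly; I have checked this algebra.

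The genuine gaps are in the structural claims, which are the actual content that makes the theorem usable. For (1), the sentence ``$Q$ is then an explicit combination of $K$ and $q$ independent of $a$'' asserts precisely what must be proved, and the obvious candidates fail it: one has $G_{a}(k)=2ikR_{a}/T_{a}=W(\overline{\psi_{+}},\psi_{-}^{(a)})$ evaluated at $x=a$, which by integrating the Wronskian derivative equals $\int_{a}^{\infty}e^{-2ikx}q(x)\overline{y_{+}(x,k)}\,dx$; inserting the transformation-operator representation of $\overline{y_{+}}$ and collecting the $e^{-2iks}$ frequency yields the density $q(s)+\int_{a}^{s}q(x)B(x,s-x)\,dx$, whose second term manifestly depends on $a$ and is not controlled by $\int_{s}^{\infty}\lvert q\rvert$ when $q$ is merely bounded (not integrable) at $-\infty$. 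So exhibiting a single $a$-free, $k$-free $Q$ obeying \eqref{g} requires a different rearrangement that your Volterra-iteration appeal does not supply. For (3) you rightly single out the $a$-independence of the jump as the crux, but what you offer is a program (``a global, $a$-free spectral interpretation of $d\rho$'') rather than an argument; moreover, before positivity and $a$-independence one must show that the possible poles of $m_{-}(\cdot,a)$ (and of $\Psi_{-}$) in gaps of $\operatorname{Spec}\mathbb{L}_{q}$ cancel in the combination \eqref{Aa}, so that $A_{a}$ is analytic off $i\Delta$ defined by the \emph{full-line} spectrum and not off a larger half-line singular set --- a point your sketch passes over. Part (2) is essentially fine.
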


The set
\[
S_{q}=\{R,\rho\}
\]
plays the role of the classical scattering data (\ref{classical scat data}) in
our one sided scattering. The measure $\rho$ carries over the information on
the negative spectrum. In particular, if $q$ is rapidly decaying at both
$\pm\infty$ then $d\rho=%
{\displaystyle\sum}
c_{n}^{2}\delta\left(  s-\kappa_{n}\right)  ds$. Therefore we can call $\rho$
a smeared norming constant measure. If $q(x)$ is a pure step function, i.e.
$q(x)=-h^{2},\;x<0,\;q(x)=0,\;x\geq0$ then $\operatorname*{Spec}\left(
\mathbb{L}_{q}\right)  =(-h^{2},\infty)$ and purely a.c., $(-h^{2},0)$ and
$\left(  0,\infty\right)  $ being components of the spectrum with respective
multiplicities one and two. Moreover%
\[
R(k)=-\left(  \frac{h}{\sqrt{k^{2}}+\sqrt{k^{2}+h^{2}}}\right)  ^{2}%
,\ \ \rho\left(  s\right)  =\frac{1}{3\pi h^{2}}\left(  h^{3}-\left(
h^{2}-s^{2}\right)  ^{3/2}\right)  .
\]

The split (\ref{R-split}) is surprisingly effective. Its main feature is that
the analytic part $A_{a}$ of $R$ mimics the rough behavior of $R$ and carries
all the information about $\rho$ in the set of scattering data $S_{q}%
=\{R,\rho\}$. The rest is at least continuous and its smoothness is determined
by the decay of $q$ at $+\infty\,$. This is crucially used in developing the
IST for step-type initial data as all properties of $R$ required for the IST
are encoded in the analytic part $A_{a}$ of $A$ through the $m-$functions
$m_{\pm}$. Thus the $m-$function works behind the scene but in a crucial way.

\section{The IST\ for the KdV equation with step-type initial
data\label{main section}}

In this section we prove Theorem \ref{MainThm}. Since the KdV equation is
translation invariant by shifting $q$ (if needed) we may assume in Theorem
\ref{step like refl} that $a=0$. Moreover, to avoid unnecessary technicalities
we suppose that $\kappa_{0}=0$ (i.e. $q_{0}$ has no bound states). In this
case $T_{0}=g_{0}\in H_{+}^{\infty}$ and (\ref{R-split}) simplifies to read%
\begin{equation}
R\left(  k\right)  =R_{0}\left(  k\right)  +A\left(  k\right)  ,
\label{anal split}%
\end{equation}
with%
\begin{equation}
R_{0}\left(  k\right)  =f_{0}\left(  k\right)  G_{0}\left(  k\right)
,\ \ G_{0}\left(  k\right)  =\int_{0}^{\infty}e^{-2iks}Q\left(  s\right)  ds
\label{Rnot}%
\end{equation}
and%
\begin{align}
A\left(  k\right)   &  =2ik\ y_{+}\left(  0,k\right)  ^{-2}\ \left(
m_{-}(k^{2})+m_{+}(k^{2})\right)  ^{-1}-T_{0}\left(  k\right)  /y_{+}\left(
0,k\right) \nonumber\\
&  =\ f_{1}\left(  k\right)  \ \frac{2ik}{m_{-}(k^{2})+m_{+}(k^{2})}%
+f_{2}\left(  k\right)  , \label{A}%
\end{align}
where
\begin{equation}
f_{0}\left(  k\right)  =T_{0}\left(  k\right)  /2ik,\ \ f_{1}\left(  k\right)
=y_{+}\left(  0,k\right)  ^{-2},\ \ f_{2}\left(  k\right)  =-T_{0}\left(
k\right)  /y_{+}\left(  0,k\right)  \ \in H_{+}^{\infty}, \label{f's}%
\end{equation}
i.e. are analytic functions all bounded in the upper half plane. Throughout
this section we assume that $q$ in (\ref{KdV}) is subject to Hypothesis
\ref{Cond}.

\subsection{Fundamental properties of the IST Hankel operator}

\begin{theorem}
\label{th9.5} Let%
\[
\varphi_{x,t}(k)=\xi_{x,t}(k)R(k)+\int_{0}^{h}\frac{\xi_{x,t}(is)\,d\rho
(s)}{s+ik}.
\]
Under Hypothesis \ref{Cond} for any real $x$ and positive $t$ the operator
$\mathbb{H}(\varphi_{x,t})$

\begin{enumerate}
\item is selfadjoint (also holds for $t=0$),

\item has no eigenvalue equal $-1$,

\item its derivatives $\partial_{t}\mathbb{H}\left(  \varphi_{x,t}\right)
,\partial_{x}^{m}\mathbb{H}\left(  \varphi_{x,t}\right)  ,0\leq m\leq5,$ are
compact Hankel operators,

\item is trace class.
\end{enumerate}
\end{theorem}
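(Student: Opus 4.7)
I would take the four items in the order (1), (4), (3), (2), finishing with the eigenvalue statement because it leans on the compactness/trace-class structure produced by the other parts.

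For (1), the identity $\mathbb{J}\varphi_{x,t}=\overline{\varphi_{x,t}}$ is a direct check: $\xi_{x,t}(-k)=\overline{\xi_{x,t}(k)}$ paired with $R(-k)=\overline{R(k)}$ from (\ref{props of R}) handles the first term, while $\xi_{x,t}(is)=e^{8s^{3}t-2sx}$ is real and $(s-ik)^{-1}=\overline{(s+ik)^{-1}}$ against the real measure $\rho$ handles the integral. Because $t$ enters only through these real quantities, the argument is valid at $t=0$ as well.

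For (4) I would apply Proposition \ref{trace AAK}. After the analytic split (\ref{anal split})--(\ref{f's}) and dropping $H_{+}^{\infty}$-summands (harmless by (\ref{fi plus h})), the effective co-analytic symbol reduces to $\xi_{x,t}R_{0}$ plus the combination
\[
\xi_{x,t}(k)\,\frac{2ik\,f_{1}(k)}{m_{-}(k^{2})+m_{+}(k^{2})}+\int_{0}^{h}\frac{\xi_{x,t}(is)\,d\rho(s)}{s+ik}.
\]
The $\xi_{x,t}R_{0}$-piece is smooth because Hypothesis \ref{Cond}(2) with $\alpha>4$ makes $\int_{0}^{\infty}s^{j}|Q(s)|\,ds$ finite for $j\leq 3$ via (\ref{g}), so $G_{0}\in C^{3}$. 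The remaining two terms combine across the cut $i\Delta$: the jump of the first on $i\Delta$ is, by construction of $d\rho$ in (\ref{drou}), precisely the one cancelled by the $\rho$-integral, and the sum extends to a function smooth on $\mathbb{R}$ with bounded second derivative (using finiteness of $\rho$ on $[0,h]$ and the vanishing of the spectral density at the edge $s=0$). This gives (4). Part (3) is an iteration: differentiation raises factors $(2ik)^{m}$ or $8ik^{3}$, and $\alpha>4$ supplies the extra derivatives of $G_{0}$ needed for $0\leq m\leq 5$, while Proposition \ref{lemma on Trace class} allows multiplication by bounded symbols without losing trace class / compactness.

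The real work is (2). Suppose $(\mathbb{I}+\mathbb{H}(\varphi_{x,t}))f=0$ for some nonzero $f\in H_{+}^{2}$. From (\ref{Hankel}) and $\mathbb{J}f\in H_{-}^{2}$ one has
\[
-\|f\|^{2}=\langle\mathbb{H}(\varphi_{x,t})f,f\rangle=\int_{\mathbb{R}}\varphi_{x,t}(k)\,f(k)\,\overline{f(-k)}\,dk.
\]
Setting $g(k):=\overline{f(-k)}$, Schwarz reflection puts $g\in H_{+}^{2}$ and $fg\in H_{+}^{1}$, so closing the contour in $\mathbb{C}^{+}$ yields $\int_{\mathbb{R}}f(k)\overline{f(-k)}/(s+ik)\,dk=2\pi|f(is)|^{2}$ for $s>0$. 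Substituting,
\[
\langle\mathbb{H}(\varphi_{x,t})f,f\rangle=\int_{\mathbb{R}}\xi_{x,t}(k)R(k)\,f(k)\,\overline{f(-k)}\,dk+2\pi\int_{0}^{h}\xi_{x,t}(is)\,|f(is)|^{2}\,d\rho(s).
\]
The second integral is non-negative ($\xi_{x,t}(is)>0$, $\rho\geq 0$), while $|\xi_{x,t}R|\leq 1$ and Cauchy--Schwarz bound the first by $\|f\|^{2}$ in modulus. To contradict $\langle\mathbb{H}f,f\rangle=-\|f\|^{2}$ I would extract strict inequality from $|R(k)|<1$ on a set of positive measure, using the decay of $R$ at $\infty$ inherited through the analytic split from Hypothesis \ref{Cond}(2). \emph{This strictness step is the main obstacle}: one has to exclude the degenerate configuration where $f$ concentrates where $|R|=1$ and simultaneously vanishes on $\operatorname{supp}\rho$, which will require invoking the finer step-type scattering structure developed in Section \ref{step type}.
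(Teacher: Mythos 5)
Your part (1) matches the paper's (one-line) argument, and you correctly identified the paper's central mechanism for (3)--(4): the jump of $\mathbb{P}_{-}(\xi_{x,t}A)$ across $i\Delta$ is exactly cancelled by the $\rho$-integral, so only an analytic remainder survives. But there are three concrete gaps. First, in (4) you apply Proposition \ref{trace AAK} directly to $\xi_{x,t}R_{0}$; this fails because $\partial_{k}^{2}\xi_{x,t}(k)$ grows like $k^{4}$ on $\mathbb{R}$ while $G_{0}(k)=\int_{0}^{\infty}e^{-2iks}Q(s)\,ds$ is only guaranteed to tend to zero (Riemann--Lebesgue; $Q$ inherits no smoothness from the merely locally bounded $q$), so $\|\partial_{k}^{2}(\xi_{x,t}R_{0})\|_{\infty}$ need not be finite. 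The paper instead factorizes $\varphi_{x,t}^{0}=\xi_{x,t}f_{0}G_{0}$ and uses Proposition \ref{lemma on Trace class}: $\mathbb{H}(f_{0})=0$ since $f_{0}\in H_{+}^{\infty}$, $\mathbb{H}(G_{0})\in\mathfrak{S}_{1}$ by Proposition \ref{trace AAK} (this is where $\alpha>4$ enters), and $\mathbb{H}(\xi_{x,t})\in\mathfrak{S}_{1}$ by deforming the contour to $\mathbb{R}+ih_{0}$, where $|\xi_{x,t}(\alpha+ih_{0})|$ has Gaussian decay for $t>0$. The same deformation (with $h_{0}>h$) is how the paper realizes the cancellation you describe: $\Phi_{x,t}$ becomes an entire function all of whose $(x,t)$-derivatives give trace class Hankel operators; no appeal to ``vanishing of the spectral density at the edge $s=0$'' is made or available. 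Second, in (3) your claim that Proposition \ref{lemma on Trace class} lets you absorb the factors $(2ik)^{m}$ is wrong -- these symbols are unbounded. The paper gets compactness of $\partial_{x}^{m}\mathbb{H}(\varphi_{x,t}^{0})$, $0\leq m\leq5$, from Cohen--Kappeler's estimate $\partial_{x}^{m}H_{x,t}\in L^{1}(0,\infty)$ for the integral kernel together with Peller's compactness criterion, not from the product lemma.

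The most serious problem is (2). Your plan to derive strict inequality from $|R(k)|<1$ on a set of positive measure cannot work in the generality of Hypothesis \ref{Cond}: the paper stresses that for step-type data $|R(k)|=1$ may hold for almost every real $k$, and that properties (\ref{props of R2})--(\ref{props of R4}) all fail. This is precisely the regime the theorem is designed to cover, so the degenerate configuration you flag as ``the main obstacle'' is not an edge case but the generic one. The paper does not prove (2) here at all; it is imported from \cite{GruRyb-prepr-13}, where the argument rests on the analytic split of Theorem \ref{step like refl}, the Herglotz structure of the $m$-functions, and Toeplitz/Hankel machinery -- none of which your quadratic-form computation (which is itself correctly set up) yet engages.
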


\begin{proof}
Part (1) is trivial as clearly $\varphi_{x,t}(-k)=\overline{\varphi_{x,t}%
(-k)}$. Part (2) is the most difficult and it is proven in our
\cite{GruRyb-prepr-13}.

Let us show (3). It follows from (\ref{anal split}) that%
\[
\varphi_{x,t}(k)=\xi_{x,t}(k)R_{0}(k)+\left\{  \xi_{x,t}(k)A(k)+\int_{0}%
^{h}\frac{\xi_{x,t}(is)\,d\rho(s)}{s+ik}\right\}  .
\]
It follows from (\ref{fi plus h}) that%
\begin{equation}
\mathbb{H}\left(  \varphi_{x,t}\right)  =\mathbb{H}\left(  \widetilde{\varphi
}_{x,t}\right)  , \label{fi tilde}%
\end{equation}
with%
\[
\widetilde{\varphi}_{x,t}\left(  k\right)  =\xi_{x,t}R_{0}+\Phi_{x,t},
\]
where%
\begin{align}
\Phi_{x,t}\left(  k\right)   &  =\mathbb{P}_{-}\left(  \xi_{x,t}A\right)
\left(  k\right)  +\int_{0}^{h}\frac{\xi_{x,t}(is)\,d\rho(s)}{s+ik}%
\label{cont deform}\\
&  =\frac{i}{2\pi}\int_{\mathbb{R}}\ \frac{\xi_{x,t}\left(  z\right)  A\left(
z\right)  }{z-\left(  k-i0\right)  }\ dz+\int_{0}^{h}\frac{\xi_{x,t}%
(is)\,d\rho(s)}{s+ik}\text{ (by (\ref{proj})).}\nonumber
\end{align}
By Theorem \ref{step like refl} $A$ is analytic in $\mathbb{C}^{+}\setminus
i\Delta$. Since $m_{\pm}$ are Herglotz functions, so is $-\left(  m_{-}%
+m_{+}\right)  ^{-1}$ and hence by (\ref{Hergtotz}) $iz\left\{  m_{-}%
(z^{2})+m_{+}(z^{2})\right\}  ^{-1}$ does not grow faster that $z^{3}$ along
the line $\mathbb{R}+ih_{0}$ for any $h_{0}>h$.\footnote{In fact, it is even
bounded.} Due to the rapid decay of $\xi_{x,t}\left(  z\right)  $ along
$\mathbb{R}+ih_{0}$ we can deform the contour in the first integral of the
right hand side of (\ref{cont deform}) to $\mathbb{R}+ih_{0}$. We have%
\begin{align*}
&  \frac{i}{2\pi}\int_{\mathbb{R}}\ \frac{\xi_{x,t}\left(  z\right)  A\left(
z\right)  }{z-\left(  k-i0\right)  }\ dz\text{ \ (by (\ref{drou}))}\\
&  =\frac{i}{2\pi}\int_{\mathbb{R}+ih_{0}}\frac{\xi_{x,t}\left(  z\right)
A\left(  z\right)  }{z-k}\ dz-\int_{0}^{h}\frac{\xi_{x,t}(is)\,d\rho(s)}%
{s+ik}.
\end{align*}
Inserting this formula into (\ref{cont deform}) yields%
\[
\Phi_{x,t}\left(  k\right)  =\frac{i}{2\pi}\int_{\mathbb{R}+ih_{0}}\frac
{\xi_{x,t}\left(  z\right)  A\left(  z\right)  }{z-k}\ dz.
\]
We emphasize that the cancellation of the second integral of the right hand
side of (\ref{cont deform}) is the main reason why our approach works.

Thus we have obtained the crucially important split of our symbol:%
\begin{equation}
\ \widetilde{\varphi}_{x,t}=\varphi_{x,t}^{0}+\Phi_{x,t}, \label{2}%
\end{equation}
where%
\[
\varphi_{x,t}^{\left(  0\right)  }:=\xi_{x,t}R_{0},\ \Phi_{x,t}(k)=\int%
_{\mathbb{R}+ih_{0}}\ \frac{\phi_{x,t}\left(  z\right)  }{z-k}dz,\ \phi
_{x,t}:=\frac{i}{2\pi}\xi_{x,t}A.
\]
Since we can take $h_{0}$ in (\ref{2}) as large as we want, the function
$\Phi_{x,t}$ is entire. Due to the rapid decay of $\xi_{x,t}\left(  z\right)
$ along each $\mathbb{R}+ih_{0}$ one easily sees that $\partial_{t}%
^{n}\partial_{x}^{m}\Phi_{x,t}$ is also entire for any nonnegative integers
$n,m$ and hence by Proposition \ref{trace AAK}%
\begin{equation}
\partial_{t}^{n}\partial_{x}^{m}\mathbb{H}(\Phi_{x,t})=\mathbb{H}(\partial
_{t}^{n}\partial_{x}^{m}\Phi_{x,t})\in\mathfrak{S}_{1}. \label{S_p 1}%
\end{equation}
The Hankel operator $\mathbb{H}(\varphi_{x,t}^{0})$ is implicitly studied in
\cite{CohenKappSIAM87}. In particular, it follows from the proof of Theorem
5.1 in \cite{CohenKappSIAM87} (the main result of this paper) that for $0\leq
m\leq5$
\begin{equation}
\partial_{x}^{m}\mathbb{H}\left(  \varphi_{x,t}^{0}\right)  ,\partial
_{t}\mathbb{H}\left(  \varphi_{x,t}^{0}\right)  \in\mathfrak{S}_{\infty}.
\label{S_infty}%
\end{equation}
Indeed, as was discussed in Subsection \ref{H}, $\mathbb{H}\left(
\varphi_{x,t}^{0}\right)  $ is unitarily equivalent to the integral Hankel
operator given by (\ref{eq4.10}) with the kernel%
\[
H_{x,t}\left(  s\right)  =\frac{1}{\pi}\int_{-\infty}^{\infty}\varphi
_{x,t}^{0}\left(  k\right)  e^{2iks}dk.
\]
Under our conditions on $q_{0}$, by (b) on page 1012 of \cite{CohenKappSIAM87}
we have that $\partial_{x}^{m}H_{x,t}\in L^{1}\left(  0,\infty\right)  ,$
$0\leq m\leq5$. By Corollary 8.11 of \cite{Peller2003} the integral Hankel
operator with kernel $\partial_{x}^{m}H_{x,t}$ is compact\footnote{In fact,
boundedness alone is trivial.} and therefore, by unitary equivalence, so is
$\partial_{x}^{m}\mathbb{H}(\varphi_{x,t}^{0})$. Since $\partial_{t}H_{x,t}=$
$\partial_{x}^{3}H_{x,t}$ one also concludes that $\partial_{t}\mathbb{H}%
(\varphi_{x,t}^{0})\in\mathfrak{S}_{\infty}$. Thus (\ref{S_infty}) is proven
and due to (\ref{S_p 1}) so is (3).

It remains to show (4). Due to (\ref{S_p 1}) one only needs to demonstrate
that $\mathbb{H}(\varphi_{x,t}^{0})$ is trace class. It follows from
(\ref{Rnot}) that%
\[
\varphi_{x,t}^{0}=\xi_{x,t}f_{0}G_{0}.
\]
By Proposition \ref{lemma on Trace class} $\mathbb{H}\left(  \xi_{x,t}%
R_{a}\right)  $ is trace class if each $\mathbb{H}\left(  \xi_{x,t}\right)
,\mathbb{H}\left(  f_{0}\right)  ,\mathbb{H}\left(  G_{0}\right)  $ is. For
$\mathbb{H}\left(  \xi_{x,t}\right)  $ as before we write
\[
\mathbb{H}\left(  \xi_{x,t}\right)  =\mathbb{H}\left(  \Phi_{x,t}^{\left(
0\right)  }\right)  ,\ \ \ \Phi_{x,t}^{\left(  0\right)  }\left(  k\right)
:=\frac{i}{2\pi}\int_{\mathbb{R}+ih_{0}}\frac{\xi_{x,t}\left(  z\right)
}{z-k}\ dz
\]
and hence $\mathbb{H}(\xi_{x,t})$ is trace class for any real $x$ and positive
$t$.

Since $f_{0}\in H_{+}^{\infty}$ (see (\ref{f's})) we simply have
$\mathbb{H}(f_{0})=0$. It remains to show that $\mathbb{H}(G_{0}%
)\in\mathfrak{S}_{1}$. It follows from condition 2 of Hypothesis \ref{Cond}
that $Q\left(  s\right)  =O\left(  s^{-\alpha+1}\right)  ,s\rightarrow\infty,$
and hence
\[
\left\vert \partial_{k}^{2}G_{0}\left(  k\right)  \right\vert \leq4\int%
_{0}^{\infty}s^{2}\left\vert Q\left(  s\right)  \right\vert ds<\infty.
\]
By Proposition \ref{trace AAK} $\mathbb{H}(G_{0})$ is trace class which
completes the proof.
\end{proof}

Theorem \ref{th9.5} says that $I+\mathbb{H}(\varphi_{x,t})$ is invertible
globally in time which is the main reason for validity of the IST for any
step-type data. It's also the most nontrivial part of Theorem \ref{th9.5}. The
proof is based on Theorems \ref{conver of m-functions}, \ref{step like refl},
properties of the m-function discussed in subsection \ref{m-funct}, and subtle
arguments and facts from the theory of Hankel/Toeplitz operators. This has
been incrementally improved in the course of our \cite{RybNON10},
\cite{RybNON11}, \cite{RybCommPDE13}, \cite{GruRyOTAA13}, \cite{GruRyPAMS13},
\cite{GruRyb-prepr-13}.

\begin{remark}
Theorem \ref{th9.5} does not say that the Hankel operator corresponding to
each piece in (\ref{eq9.9}) is trace class. In fact, it is shown in
\cite{GruRyb-prepr-13} that the Hankel operator with symbol $\phi_{x,t}\left(
k\right)  :=\int_{0}^{h}\frac{\xi_{x,t}(is)\,d\rho(s)}{s+ik}$ is trace class
iff $\int_{0}^{h}d\rho(s)/s$ is bounded.
\end{remark}

\subsection{Proof of the main theorem\label{main theorem}}

The proof of Theorem \ref{MainThm} merely combines Theorem \ref{th9.5} and
results of \cite{CohenKappSIAM87}.

\begin{proof}
Take $b<0$ and consider $q_{b}$. Below any object corresponding to $q_{b}$
will be labelled by either a subscript $b$ or a superscript $\left(  b\right)
$. The KdV equation with the initial data $q_{b}$ has a classical solution
$u_{b}\left(  x,t\right)  $ \cite{CohenKappSIAM87} given by the Dyson formula%
\begin{equation}
u_{b}\left(  x,t\right)  =-2\partial_{x}^{2}\log\det\left(  I+\mathbb{H}%
(\varphi_{x,t}^{\left(  b\right)  })\right)  , \label{u0}%
\end{equation}
where by Theorem \ref{th9.5} the operator $\mathbb{H}(\varphi_{x,t}^{\left(
b\right)  })$ is trace class and hence the determinant is well-defined.

Let%
\begin{equation}
u\left(  x,t\right)  =-2\partial_{x}^{2}\log\det\left(  I+\mathbb{H}%
(\varphi_{x,t})\right)  , \label{u}%
\end{equation}
where by the same theorem the operator $\mathbb{H}(\varphi_{x,t})$ is trace
class. Consider
\begin{align}
\Delta u  &  :=u-u_{b}\label{delta u}\\
&  =2\partial_{x}^{2}\log\det\left(  I+\mathbb{H}(\varphi_{x,t})\right)
^{-1}\left(  I+\mathbb{H}(\varphi_{x,t}^{\left(  b\right)  })\right)
\nonumber\\
&  =2\partial_{x}^{2}\log\det\left(  I\mathbb{+}\left(  I+\mathbb{H}%
(\varphi_{x,t})\right)  ^{-1}\left(  \mathbb{H}(\varphi_{x,t}^{\left(
b\right)  })-\mathbb{H}(\varphi_{x,t})\right)  \right) \nonumber\\
&  =2\partial_{x}^{2}\log\det\left\{  I\mathbb{+}\left[  I+\mathbb{H}%
(\varphi_{x,t})\right]  ^{-1}\Delta\mathbb{H}(\varphi_{x,t})\right\}
.\nonumber
\end{align}
By Theorem \ref{th9.5} $\left[  1+\mathbb{H}\left(  \varphi_{x,t}\right)
\right]  ^{-1}$ is a bounded operator independent of $b$. In virtue of
(\ref{fi tilde}) and (\ref{2}) for $\Delta\mathbb{H}(\varphi_{x,t})$ we have%
\begin{align*}
\Delta\mathbb{H(}\varphi_{x,t})  &  \mathbb{=}\mathbb{H}\left(  \varphi
_{x,t}^{\left(  b\right)  }-\varphi_{x,t}\right)  =\mathbb{H}\left(
\widetilde{\varphi}_{x,t}^{\left(  b\right)  }-\widetilde{\varphi}%
_{x,t}\right) \\
&  =\mathbb{H}\left(  \Delta\Phi_{x,t}\right)  ,
\end{align*}
where%
\begin{align*}
\Delta\Phi_{x,t}\left(  k\right)   &  =\frac{i}{2\pi}\int_{\mathbb{R}+ih_{0}%
}\frac{\xi_{x,t}\left(  z\right)  \Delta A\left(  z\right)  }{z-k}\ dz\\
&  =\frac{i}{2\pi}\int_{\mathbb{R}+ih_{0}}2iz\ f_{1}\left(  z\right)
\xi_{x,t}\left(  z\right)  \ \Delta f\left(  z\right)  \frac{dz}{z-k}%
\end{align*}
and
\[
\Delta f\left(  z\right)  :=\left(  m_{-}^{\left(  b\right)  }(z^{2}%
)+m_{+}(z^{2})\right)  ^{-1}-\left(  m_{-}(z^{2})+m_{+}(z^{2})\right)  ^{-1}.
\]
By (\ref{S_p 1}) $\partial_{t}^{n}\partial_{x}^{m}\mathbb{H}\left(  \Delta
\Phi_{x,t}\right)  $ is trace class. We now show that for all $n$ and $m$%
\begin{equation}
\left\Vert \partial_{t}^{n}\partial_{x}^{m}\mathbb{H}\left(  \Delta\Phi
_{x,t}\right)  \right\Vert _{\mathfrak{S}_{1}}\rightarrow0,b\rightarrow
-\infty. \label{limit in trace norm}%
\end{equation}
To this end consider
\begin{align}
\Delta\phi\left(  k\right)   &  :=\partial_{t}^{n}\partial_{x}^{m}\left(
\Delta\Phi_{x,t}\right) \label{delta}\\
&  =\frac{i}{2\pi}\int_{\mathbb{R}+ih_{0}}\partial_{t}^{n}\partial_{x}^{m}%
\xi_{x,t}\left(  z\right)  2iz\ f_{1}\left(  z\right)  \ \Delta f\left(
z\right)  \frac{dz}{z-k}.\nonumber
\end{align}
Differentiating (\ref{delta}) in $k$ twice, one has%
\[
\partial_{k}^{2}\Delta\phi\left(  k\right)  =\frac{1}{\pi}\int_{\mathbb{R}%
+ih_{0}}\left(  2iz\right)  ^{3n+m+1}\xi_{x,t}\left(  z\right)  f_{1}\left(
z\right)  \Delta f\left(  z\right)  \ \frac{dz}{\left(  z-k\right)  ^{3}}%
\]
and hence%
\begin{align}
\sup_{k\in\mathbb{R}}\left\vert \partial_{k}^{2}\Delta\phi\left(  k\right)
\right\vert  &  \leq\frac{2^{3n+m+1}}{\pi h_{0}^{3}}\int_{\mathbb{R}+ih_{0}%
}\left\vert z\right\vert ^{3n+m+1}\left\vert \xi_{x,t}\left(  z\right)
\right\vert \left\vert f_{1}\left(  z\right)  \right\vert \left\vert \Delta
f\left(  z\right)  \right\vert \ \left\vert dz\right\vert \label{est}\\
&  \leq const.\int_{\mathbb{R}+ih_{0}}\left\vert z^{3n+m+1}\xi_{x,t}\left(
z\right)  \right\vert \left\vert \Delta f\left(  z\right)  \right\vert
\ \left\vert dz\right\vert \text{ (since }f_{1}\in H_{+}^{\infty}%
\text{).}\nonumber
\end{align}
Since%
\[
\left\vert \xi_{x,t}(\alpha+ih_{0})\right\vert =\xi_{x,t}(ih_{0})\exp\left\{
-24h_{0}t\alpha^{2}\right\}  ,
\]
one sees that $z^{3n+m+1}\xi_{x,t}\left(  z\right)  $ falls off along
$\mathbb{R}+ih_{0}$ faster than exponential for any $n,m$. Split the contour
$\mathbb{R}+ih_{0}$ into $\gamma_{N}=\left(  -N+ih_{0},N+ih_{0}\right)  $ and
$\Gamma_{N}=\left(  \mathbb{R}+ih_{0}\right)  \setminus\gamma_{N}$. Since
clearly $\operatorname*{Spec}\mathbb{L}_{q_{b}}\geq-h^{2}$ by Theorem
\ref{conver of m-functions} $\Delta f\left(  z\right)  \rightarrow
0,b\rightarrow-\infty,$ uniformly on $\gamma_{N}$ for any $N$ and hence%
\[
\int_{\gamma_{N}}\left\vert z^{3n+m+1}\xi_{x,t}\left(  z\right)  \right\vert
\left\vert \Delta f\left(  z\right)  \right\vert \ \left\vert dz\right\vert
\rightarrow0,\ \ \ b\rightarrow-\infty.
\]
Consider $\Delta f\left(  z\right)  $ on $\Gamma_{N}$. Since the $m$-function
has a non-negative imaginary part\footnote{Note that $z^{2}$ is in
$\mathbb{C}^{+}$ if $z$ is in the first quadrant. If $z$ is in the second
quadrant, then $\operatorname{Im}m\left(  z^{2}\right)  \leq0$.} (the Herglotz
property), one has%
\begin{align*}
\left\vert \Delta f\left(  z\right)  \right\vert  &  \leq\left\vert \frac
{1}{m_{-}^{\left(  b\right)  }(z^{2})+m_{+}(z^{2})}\right\vert +\left\vert
\frac{1}{m_{-}(z^{2})+m_{+}(z^{2})}\right\vert \\
&  \leq\left\vert \frac{1}{\operatorname{Im}m_{-}^{\left(  b\right)  }%
(z^{2})+\operatorname{Im}m_{+}(z^{2})}\right\vert +\left\vert \frac
{1}{\operatorname{Im}m_{-}(z^{2})+\operatorname{Im}m_{+}(z^{2})}\right\vert \\
&  \leq\left\vert \frac{2}{\operatorname{Im}m_{+}(z^{2})}\right\vert .
\end{align*}
For our decay condition at $+\infty$ one has $\psi_{+}\left(  0,z\right)
=1+O\left(  1/z\right)  $ (see e.g. .\cite{DeiftTrub}) and hence $m_{+}\left(
z^{2}\right)  =\partial_{x}\psi_{+}\left(  0,z\right)  /\psi_{+}\left(
0,z\right)  =iz+O\left(  1/z\right)  $ as $\left\vert z\right\vert
\rightarrow\infty$ in $\mathbb{C}^{+}$. Therefore, $\left\vert \dfrac
{2}{\operatorname{Im}m_{+}(z^{2})}\right\vert $ is bounded on $\Gamma_{N}$ and
hence by choosing $N$ large enough the integral%
\begin{align*}
&  \int_{\Gamma_{N}}\left\vert z^{3n+m+1}\xi_{x,t}\left(  z\right)
\right\vert \left\vert \Delta f\left(  z\right)  \right\vert \ \left\vert
dz\right\vert \text{ }\\
&  \leq2\int_{\Gamma_{N}}\left\vert z^{3n+m+1}\xi_{x,t}\left(  z\right)
\right\vert \left\vert \frac{dz}{\operatorname{Im}m_{+}(z^{2})}\right\vert \
\end{align*}
can be made as small as one wishes for any real $x$ and positive $t$.

Thus we can conclude that $\left\Vert \Delta\phi^{\prime\prime}\right\Vert
_{\infty}\rightarrow0$ as $b\rightarrow-\infty$ and Proposition
\ref{trace AAK} implies (\ref{limit in trace norm}).

Next we show that $u\left(  x,t\right)  $ given by (\ref{u}) is differentiable
three time in $x$ and once in $t$. We have
\begin{align*}
u\left(  x,t\right)   &  =-2\partial_{x}^{2}\log\det\left\{  1+\mathbb{H}%
(\varphi_{x,t})\right\}  =-2\partial_{x}^{2}\log\det\left\{  1+\mathbb{H}%
(\widetilde{\varphi}_{x,t})\right\} \\
&  =-2\partial_{x}^{2}\log\det\left\{  1+\mathbb{H}(\varphi_{x,t}^{0}%
+\Phi_{x,t})\right\}  \text{ \ (by (\ref{2}))}\\
&  =-2\partial_{x}^{2}\log\det\left\{  1+\mathbb{H}(\varphi_{x,t}%
^{0})+\mathbb{H}\left(  \Phi_{x,t}\right)  \right\} \\
&  =-2\partial_{x}^{2}\log\det\left\{  1+\mathbb{H}(\varphi_{x,t}^{0})\right\}
\\
&  -2\partial_{x}^{2}\log\det\left\{  1+\left[  1+\mathbb{H}(\varphi_{x,t}%
^{0})\right]  ^{-1}\mathbb{H}\left(  \Phi_{x,t}\right)  \right\} \\
&  =u_{0}\left(  x,t\right)  +U\left(  x,t\right)  \text{ \ (by\ (\ref{u0})
with }b=0\text{),}%
\end{align*}
where%
\[
U\left(  x,t\right)  :=-2\partial_{x}^{2}\log\det\left\{  1+\left[
1+\mathbb{H}(\varphi_{x,t}^{0})\right]  ^{-1}\mathbb{H}\left(  \Phi
_{x,t}\right)  \right\}  .
\]
The well-known differentiation formula
\[
\left(  \log\det\left(  1+A\right)  \right)  ^{\prime}=\operatorname*{tr}%
\left(  1+A\right)  ^{-1}A^{\prime},
\]
(\ref{S_p 1}) and (\ref{S_infty}) imply that $U\left(  x,t\right)  $ is
differentiable three time in $x$ and once in $t$. Since $u_{0}\left(
x,t\right)  $ is the classical solution to (\ref{KdV}) with $q=q_{0}$ by
definition $u_{0}\left(  x,t\right)  $ has the same property and thus so is
$u=u_{0}+U$. It follows from (\ref{delta u}), (\ref{limit in trace norm}), and
Theorem \ref{th9.5} that for $0\leq m\leq3$%
\begin{equation}
\partial_{x}^{m}u_{b}\left(  x,t\right)  \rightarrow\partial_{x}^{m}u\left(
x,t\right)  ,\ \ \ \partial_{t}u_{b}\left(  x,t\right)  \rightarrow\partial
u\left(  x,t\right)  . \label{limits}%
\end{equation}

Finally, it only remains to show that $u$ indeed solves the KdV equation. To
this end, represent $u=u_{b}+\Delta u$ where as above $\Delta u=u-u_{b}$. We
have%
\begin{align}
&  \partial_{t}u-6u\partial_{x}u+\partial_{x}^{3}u\label{rhs}\\
&  =\partial_{t}\Delta u+3\partial_{x}\left[  \left(  \Delta u-2u\right)
\Delta u\right]  +\partial_{x}^{3}\Delta u\nonumber\\
&  \rightarrow0,\ \ b\rightarrow-\infty\text{ (due to (\ref{limits}%
))}\nonumber
\end{align}
and the proof is complete.
\end{proof}

The conditions of Hypothesis \ref{Cond} are very general and admit the case of
$\left\vert R\left(  k\right)  \right\vert =1$ for almost all real $k$ that
has never been considered in the literature before. In the quantum mechanical
sense, such $q$'s are repulsive for plane waves coming from $+\infty$.
Examples include (1) \emph{Gaussian white noise} on a left half line (like the
stock market), (2)\emph{\ Pearson blocks} (certain sparse sequences of bumps),
(3) \emph{Kotani potentials} (certain random slowly decaying at $x\rightarrow
-\infty$ functions \cite{KoU88}), and (4) functions growing at $-\infty$ (not
quite physical), to mention just four.

\begin{remark}
As a by-product, we have shown that the operator-valued function $\left(
x,t\right)  \rightarrow\left[  1+\mathbb{H}(\xi_{x,t}R_{0})\right]
^{-1}\mathbb{H}\left(  \varphi_{x,t}-\xi_{x,t}R_{0}\right)  $ is continuously
differentiable in trace norm five times in $x$ and at least once in $t$.
Analogues statements for $\left(  x,t\right)  \rightarrow\mathbb{H}\left(
\varphi_{x,t}\right)  $ will be studied jointly with S. Grudsky elsewhere. We
only mention here that Proposition \ref{trace AAK} is no longer useful and our
arguments are based upon Peller's subtle characterization of all trace class
Hankel operators \cite{Peller2003} and preliminary results to this effect is
to appear in \cite{RybPAMS17}.
\end{remark}

\begin{remark}
\label{remark}The first condition in Hypothesis \ref{Cond} cannot be relaxed
as the following simple argument suggests. Consider a sequence of soliton type
bumps $q_{n}\left(  x\right)  $ of height $-\varkappa_{n}^{2}$ located on
$\left(  -\infty,0\right)  $ with some phases $\gamma_{n}$. Under the KdV flow
all $q_{n}$ start moving to the right with velocities $2\varkappa_{n}^{2}$. We
can choose $\left(  \varkappa_{n}\right)  ,\left(  \gamma_{n}\right)
\rightarrow\infty$ so that all $q_{n}\left(  x,t\right)  $ would meet at a
fixed point $x_{0}$ at a fixed time $t_{0}$. Apparently, this means that a
\emph{blow-up solution} develops in time $t_{0}$ which can be made arbitrarily
small. The operator $\mathbb{L}_{q}$ is clearly unbounded below. Note that our
approach breaks down in a crucial way if we relax the semiboundedness
condition. We don't plan to pursue this issue any further as this situation
looks physical irrelevant.
\end{remark}

\begin{remark}
The second condition in Hypothesis \ref{Cond} can be somewhat relaxed but the
statement becomes weaker. For example, the condition $\int^{\infty}\left\vert
xq\left(  x\right)  \right\vert dx<\infty$ will guarantee the existence of
$\det\left(  1+\mathbb{H}(\varphi_{x,t})\right)  $ but classical
differentiability becomes a serious issue. Further relaxation of the decay at
$+\infty$ is a big open problem. It follows from the famous 1993 Bourgain
result \cite{Bourgain93} that the problem (\ref{KdV}) will remain well-posed
(although in a much weaker sense) if $q$ is square integrable at $+\infty$ but
it is currently unknown if (\ref{KdV}) is completely integrable for $q\in
L^{2}$. Note that even the particular case of Wigner-von Neumann initial
profiles is still an open problem \cite{MatveevOpenProblems}. But as opposed
to Remark \ref{remark}, such initial profiles are physically relevant as they
may be used to model rogue waves \cite{MatveevOpenProblems}.
\end{remark}

We emphasize that as we have shown the solution (\ref{det_form}) is classical.
That is, the solution is at least three-times continuously differentiable in
$x$ and at least once in $t$ while we don't assume any smooths of the initial
data. Thus the \emph{KdV flow }instantaneously smoothens any (integrable)
singularities of $q\left(  x\right)  $. This effect, commonly called now
\emph{dispersive smoothing}, was first proven in 1978 by Cohen
\cite{Murray(Cohen)78} for box shaped initial data with much of effort. In
\cite{RybCommPDE13} we prove it for any initial data with the decay%
\begin{equation}
q\left(  x\right)  =O\left(  \exp\left\{  -Cx^{\delta}\right\}  \right)
,\ \ \ x\rightarrow\infty, \label{strong decay}%
\end{equation}
with some positive $C$ and $\delta$. If $\delta>1/2$ then $q(x,t)$ is
meromorphic with respect to $x$ on the whole complex plane (with no real
poles) for any $t>0$. If $\delta=1/2$ then $q(x,t)$ is meromorphic in a strip
around the $x-$axis widening proportionally to $\sqrt{t}$. For $0<\delta<1/2$
the solution need not be analytic but is at least Gevrey smooth. We also prove
in \cite{GruRyb-prepr-13} that under the extra condition (\ref{strong decay})
the singular numbers of $\mathbb{H}(\varphi_{x,t})$ have subexponential decay
uniformly on compacts of $\left(  x,t\right)  $. The latter means that the
determinant in (\ref{det_form}) rapidly converges suggesting that
(\ref{det_form}) could be used for numerical computations (cf. recent
\cite{Bornemann10} for new numerical techniques for Fredholm determinants).

We also note that our approach can handle \cite{GruRemRyb2015} nonintegrable
singularities like Dirac $\delta-$functions, Coulomb potentials, etc. and the
strong smoothing effect takes place even in this very singular, although not
quite physical, setting.

Note that our solutions don't in general satisfy conservations laws. It would
be interesting to find an analog of $\int u^{2}\left(  x,t\right)  dx$ under
our conditions. Certain regularizations of conservation laws in a highly
singular setting were considered in our \cite{Ryb10OT}. We also by-passed
answering such questions as well-posedness of the one-sided inverse scattering
problem or direct proof of time evolution of scattering quantities. It is the
limiting procedure that allowed us to detour such delicate questions.

\section{\label{conclusions}Conclusions}

We have given a partial answer to Zakharov's question stated in
\cite{ZakharovetalPhysD2016}: "In spite of all these brilliant achievements,
the theory of the KdV equation is not yet developed to a level which would
satisfy a pragmatic physicist, who may ask the following question: What
happens if the initial data in the KdV equation is neither decaying at
infinity nor periodic? Suppose that the initial data is a bounded function
\[
u(x)=u(x,0),|u(x)|<c.
\]
Can we extend the IST to this case, which has great practical importance?"
Theorem \ref{MainThm} gives the affirmative answer to this question under the
extra assumption that the initial profile decays fast enough at $+\infty$.
However, only boundedness\footnote{In fact, only essential boundedness from
below is required \cite{GruRyb-prepr-13}.} from below is actually required.
This can be viewed as a very strong manifestation of unidirectional nature of
the KdV equation:\ no condition at $-\infty$ and a decay condition at
$+\infty$. A complete answer to Zakharov's question requires the study of the
influence of $+\infty$ on the KdV solution. By Bourgain's theorem a decay
slower than $O\left(  x^{-1/2}\right)  ,x\rightarrow+\infty$, will cause some
major issues as the problem (\ref{KdV}) may fail to be well-posed. But even if
it is well-posed, we need not have even one-sided scattering in this situation
and would have to deal the spectral problem instead. The latter becomes very
complicated and the time evolution of the spectral data need not be simple.
The Lax pair representation of the KdV equation doesn't appear to be any
easier than the KdV equation itself. Due to complexity of the spectrum the
solution may have such a complicated structure that tracking it may be
impossible and example of such situations are already known. It happens in the
study of the so-called \emph{soliton gas}, a random distribution of infinitely
many solitons. The underlying physics of this situation suggests that
statistical description is much more suitable. Such approach was pioneered by
Zakharov \cite{Zakharov(soliton gas)71} back in 1971 and recently received
renewed interest in the connection with \emph{integrable turbulence}
considered in \cite{Zakharov2009}. The theory is under construction. We only
mention \cite{ZakharovetalPhysD2016} where certain type of soliton gas is
described as a closure of the set of reflectionless rapidly decaying
potentials of the Schr\"{o}dinger operator. The resulting solutions are
bounded, but neither periodic nor vanishing as $x\rightarrow\pm\infty$. (see
also Gesztesy et al \cite{Gesztesy-Duke92}). A different approach to soliton
gas and integrable turbulence was put forward by El (see, e.g. \cite{El2016},
\cite{El-et-al2011} and the literature cited therein). His approach is based
on a closure of finite band potentials. Physical examples of integrable
turbulence include coastal areas of seas, and effects occurring in optical fibers.

As we have already mentioned, step like initial profiles were first considered
during the initial boom in the 1970s. The case of $q$'s attaining different
limits at $\pm\infty$ was considered first by Gurevich-Pitaevski
\cite{Gurevich71} in 1973 and has been further developed by Hruslov
\cite{Hruslov76} in 1976, Cohen \cite{Cohen1984} in 1984, Venakides
\cite{Venak86} in 1986, and many others. The most complete asymptotic analysis
of this case was recently done by Teschl and his collaborators in
\cite{TeschlRarefaction16}, \cite{TeschlShock2016} (which also contain the
expensive literature on the subject) \ The treatment is based upon the
scattering theory for step potentials and somewhat similar to the rapidly
decaying case but with serious complications coming from the negative
continuos spectrum. The main feature of this case is that the initial step
will emit solitons which are asymptotically twice as high as the original step
followed by a nearly periodic "washboard". Another physically interesting case
of a profile rapidly decaying at one end and approaching a periodic function
at the other was first considered by Kotlyarov-Hruslov \cite{KK94} in 1994.
The study of such initial profiles recently culminated in \cite{Egorova11}
where two crystals fused together were considered.

Save \cite{Egorova11} our class of step-type initial data is much more
general. However the important problem of finding asymptotics of our solutions
given in Theorem \ref{MainThm} is yet to be solved. The main challenge is that
it is not clear at all how to adapt the powerful machinery of the
Riemann-Hilbert problem so effectively used since the seminal 1993 paper
\cite{DZ93} by Deift-Zhou to our setting. The above mentioned 2016 papers
\cite{TeschlRarefaction16}, \cite{TeschlShock2016} do not suggest an easy solution.

Another important recent breakthrough is related to the 2008 question due to
Deift \cite{DeiftOpenProb08}. He conjectures that, as in the periodic case,
the solution will be almost periodic in time emphasizing that its existence
even for small time is not known. A partial affirmative answer was recently
given by Binder et al \cite{Binderetal2015}.

Note that there are classes of explicit solutions to the KdV equation which
are neither rapidly decaying nor periodic (quasi periodic). Many such
solutions come from considering specific tau-functions in (\ref{tau}). This
way a very important class of positon solutions was discovered by Matveev (see
e.g. \cite{Mat02}). Such solutions are parametrized by a finite number of
constants and have some interesting properties. However they are all singular
and cannot be described within a suitable IST. It has also been long known
(see e.g. the books \cite{AC91} and \cite{MarchBook88}) that certain (formal)
substitutions parametrized by some functions solve the KdV equation but again
neither rapidly decaying nor (quasi) periodic. However as Marchenko says
\cite{MarchenkoWhatIteg91} "It has not been found yet whether it is possible
(and if possible, then by what means) to determine these parameters so as to
obtain the solution satisfying the initial data $u\left(  x,0\right)
=q\left(  x\right)  $, i.e., to solve the Cauchy problem." A partial answer is
given in the same paper \cite{MarchenkoWhatIteg91} in terms of a closure of
certain specific types of potentials. The membership in such classes is hard
to verify. Since time evolution in all these formulas is a priori given and
simple, such solutions are very specific.

\section{Acknowledgement}

We are thankful to Vladimir Zakharov and Gennady El for stimulating
discussions and literature hints. We would also like to thank the anonymous
referees whose constructive criticism and numerous suggestions have
significantly improved the presentation.

\end{document}